\documentclass[a4paper,UKenglish,cleveref, autoref, thm-restate]{lipics-v2021}
\usepackage[export]{adjustbox} %

\usepackage{amssymb,amsmath, amsthm}
\usepackage{cleveref}
\usepackage{graphicx}
\usepackage{xcolor}
\usepackage{wasysym}

	\newcommand*{\myproblem}[1]{\noindent\underline{\textsc{#1}}.~}

\definecolor{solarizedYellow}{HTML}{B58900}
\definecolor{solarizedOrange}{HTML}{CB4B16}
\definecolor{solarizedRed}{HTML}{DC322F}
\definecolor{solarizedMagenta}{HTML}{D33682}
\definecolor{solarizedViolet}{HTML}{6C71C4}
\definecolor{solarizedBlue}{HTML}{268BD2}
\definecolor{solarizedCyan}{HTML}{2AA198}
\definecolor{solarizedGreen}{HTML}{859900}

\newcommand*{\locus}{\mathsf{locus}}
\newcommand*{\strlabel}{\mathsf{label}}
\newcommand*{\pred}{\mathsf{pred}}
\newcommand*{\suc}{\mathsf{succ}}
\newcommand*{\pali}[1]{\overleftarrow{#1}} %
\newcommand*{\STree}{\mathsf{ST}}
\newcommand*{\lca}{\mathsf{lca}}
\newcommand*{\timeSU}{\ensuremath{\log^2\log n}}
\newcommand*{\colors}{\mathcal{C}}

\newcommand*{\fact}{\mathcal{F}}
\newcommand*{\rext}{E_r}
\newcommand*{\sext}{S_r}

\usepackage{xcolor}
\newcommand{\gk}[1]{\textcolor{blue}{#1}}
\newcommand{\dk}[1]{\textcolor{red}{#1}}
\newcommand{\hili}[1]{\textcolor{teal}{#1}}

\renewcommand{\gk}[1]{}
\renewcommand{\dk}[1]{}
\renewcommand{\hili}[1]{#1}

\nolinenumbers

\author{Dominik {Köppl}}{University of Yamanashi, Kofu, Japan \and \url{https://dkppl.de}}{dkppl@dkppl.de}{0000-0002-8721-4444}{This work was supported in part by JSPS KAKENHI Grant Number 25K21150. We thank Fédération de Recherche Bézout for supporting the research visit of the first author to LIGM, where this work was initiated.}
\author{Gregory Kucherov}{LIGM, CNRS and Gustave Eiffel University, Marne-la-Vall\'ee, France \and \url{http://igm.univ-mlv.fr/~koutcher/}}{gregory.kucherov@univ-eiffel.fr}{0000-0001-5899-5424}{}

\authorrunning{Köppl and Kucherov}
\Copyright{D. Köppl and G. Kucherov}

\title{Smallest suffixient set maintenance in near-real-time}
\titlerunning{Smallest suffixient set maintenance in near-real-time}

\ccsdesc{Information systems~Information retrieval}
\keywords{online algorithms, string algorithms, suffix tree, real-time computation, smallest suffixient set, string attractor}

\begin{document}
\maketitle
	
	\begin{abstract}
		The size of the \textit{smallest suffixient set} of positions of a string recently emerged as a new measure of string \textit{repetitiveness} -- a measure reflecting how much of repetitive content 
		the string contains. We study how to maintain the smallest suffixient set online in near-real-time, that is with small (in our case, polyloglog) worst-case time for processing each letter. Two frameworks are considered: 
		when the text is given letter-by-letter in either right-to-left or left-to-right order. 
		Our central algorithmic tool is  Weiner's suffix tree algorithm and associated algorithmic primitives for its efficient implementation. 
	\end{abstract}
	
	\vspace*{\fill}
	\clearpage
	\setcounter{page}{1}
	\section{Introduction}
	
	Nowadays, it is not uncommon to deal with highly repetitive data in the sense that many data fragments are not locally consecutively repeated but rather spread across the entire data.
	Examples include genomic sequences, files of version control systems or archives, sensor data, and many others. 
	This has motivated recent work on how indexing data structures can exploit this redundancy~\cite{navarro21indexing1,navarro21indexing2}. A formal framework for such studies requires a definition of a measure of redundancy (repetitiveness) of data, and there are many ways to define  such a measure \cite{navarro21indexing1}. These measures can be classified into two categories: those based on compressed representations of the data and those based on ``extrinsic'' properties of the data, unrelated to any compressed representation. For sequential data, the first category can be exemplified by dictionary-based compression algorithms, such as Lempel--Ziv or grammar-based compression. As for the second category, several new measures have been defined, such as \textit{smallest string attractor} \cite{kempa18stringattractors} or \textit{normalized substring complexity} \cite{raskhodnikova13sublinear,DBLP:journals/corr/abs-2510-16454}. 

	A string attractor is a set of positions in the string such that
	every \textit{distinct} substring has an occurrence that covers one of those positions. 
	The size of the smallest attractor $\gamma$ of a string of length $n$ is a lower bound on the number of phrases of the Lempel--Ziv parsing or the size of a grammar producing this string, 
	but, on the other hand, those can be bounded as $O(\gamma\log\frac{n}{\gamma})$, which shows that $\gamma$ is a meaningful compressibility estimator. It has also been shown that space $O(\gamma\log\frac{n}{\gamma})$ is sufficient for a data structure capable of supporting string matching in a time linear in the pattern length  \cite{10.1145/3426473}. 
	
	Unfortunately, computing the smallest attractor is an NP-complete problem \cite{kempa18stringattractors}. Recently, a related measure was proposed:  
	the size of the \textit{smallest suffixient set}, denoted $\chi$ \cite{DBLP:journals/corr/abs-2312-01359}. The smallest suffixient set is an attractor that is not necessarily a smallest string attractor~\cite{DBLP:journals/corr/abs-2312-01359}.  %
	In contrast to $\gamma$, $\chi$ can be computed in linear time \cite{DBLP:conf/spire/CenzatoOP24}. Testing if a given set of positions is suffixient is done in linear time as well  \cite{DBLP:journals/corr/abs-2506-08225}. 
	Several properties of $\chi$ have been established recently. These include relationships of $\chi$ with other repetitiveness measures and how it can be affected by transformations of the input string, such as edits or reversal \cite{DBLP:conf/spire/NavarroRU25}. Very recently, it has been shown that a string can be encoded into a lossless representation of size $O(\chi)$ \cite{ShibataBannai26}. 
	
	In this paper, we study how to compute $\chi$ in the \textit{online} mode in \textit{near-real-time}, that is with small (in our case, double-logarithmic) worst-case time guarantees per letter. 
	An online algorithm for computing $\chi$ has already been proposed by Navarro et al.~\cite{DBLP:journals/corr/abs-2506-05638}, which, however, provides only \textit{amortized} time bounds, whereas our primary interest here is in worst-case bounds. 
	
	Since the definition of suffixient set is ``oriented'', i.e. is not invariant under string reversal (as opposed, e.g., to string attractors), for online algorithms it is relevant whether the string is input left-to-right or right-to-left. Both cases were considered in \cite{DBLP:journals/corr/abs-2506-05638}. For the left-to-right input, we propose a different computation based on Weiner's suffix tree algorithm, following the techniques that we recently applied to some other problems on strings \cite{KopplKucherov2026}. For the right-to-left input, we propose a 
	version of the algorithm of \cite{DBLP:journals/corr/abs-2506-05638} admitting an efficient near-real-time implementation. 
	
	Since Weiner's algorithm and its efficient implementations are central to our work,  Section~\ref{sec:algotools} below is devoted to a short presentation of this algorithm, its improvements and associated algorithmic primitives. Our main results are then presented in Section~\ref{sec:results}. 
	
	\section{Supermaximal extensions and suffixient sets}
	We consider an integer alphabet $\Sigma$ and assume $\sigma=|\Sigma|=n^{O(1)}$ where $n$ is the input size. 
	Consider a text $w\in\Sigma^*$ and let $\fact_w$ be the set of substrings of $w$. A substring $u$ is called \textit{right-maximal} if for two distinct letters $x\neq y$, we have $ux,uy\in\fact_w$. 
	Given a right-maximal substring $u$, a substring $ux\in\fact_w$ is called a \textit{right extension}. In what follows it will be convenient for us to represent a right extension as a pair $(u,x)$ where $u\in\Sigma^*$ and $x\in\Sigma$. 
	Let $\rext(w)$ denote the set of all right extensions of $w$. 
	
	Consider $\rext(w)$ and consider its minimal subset $\sext(w)\subseteq \rext(w)$ 
	such that for every $(u,x) \in \rext(w)$, there exists $(v,x) \in \sext(w)$ such that $u$ is a suffix of $v$. 
	Due to minimality, for every pair $(u,x),(v,x)\in \sext(w)$ with $u \neq v$, neither $u$ is a suffix of $v$ nor $v$ is a suffix of $u$. 
		The subset $\sext(w)$ is uniquely defined and contains all those right extensions $(u,x)$ for which $(zu,x)$ is not a valid right extension for any letter $z\in\Sigma$. 
	The latter means that either $zux\notin\fact_w$ or $zux\in\fact_w$ but $zu$ is not right-maximal, i.e. $zuy\notin\fact_w$ for any other letter $y\neq x$. 
	
	Elements of $\sext(w)$ are called \textit{supermaximal (right-)extensions} (hereafter, SREs) \cite{DBLP:journals/corr/abs-2407-18753,DBLP:conf/spire/NavarroRU25}. By definition, every right extension of $w$ is a suffix of a right extension of $\sext(w)$. Since $\sext(w)$ is uniquely defined, it is also the smallest set of right extensions with this property. 
	
	Each SRE $(u,x)\in\sext(w)$ can be mapped to the end position of an arbitrary occurrence of $ux$ in $w$. 
	Distinct supermaximal extensions are necessarily mapped to distinct positions~\cite[Lemma~1]{DBLP:conf/spire/CenzatoOP24}.
	Therefore, any such mapping defines a one-to-one correspondence between SREs and a certain subset of positions of $w$. Such a set of positions is called a \textit{smallest suffixient set} (hereafter, SSS), 
	i.e.\ a smallest set of positions such that every right extension $(u,x)$ has an occurrence of $ux$ ending at one of those positions. 
	Consequently, the set of SREs (based on substring distinctness) is uniquely defined whereas an SSS (based on the occurrences of SREs) is not. 
		
	\begin{example}\label{ex:suffixient_example}
	For the string $w=\mathtt{aabaababa\$}$, $\sext(w)$ consists of three SREs: $(\texttt{aaba},\texttt{a})$, $(\texttt{aaba},\texttt{b})$ and $(\texttt{aba},\texttt{\$})$. Its unique SSS consists of positions shown underlined in the string:
	$\mathtt{aaba}\underline{\mathtt{a}}\mathtt{ba}\underline{\mathtt{b}}\mathtt{a}\underline{\mathtt{\$}}$. For $v=\mathtt{aabaabaab\$}$, $\sext(v)=\{(\mathtt{a},\mathtt{a}), (\mathtt{a},\mathtt{b}),(\mathtt{aabaab},\mathtt{a}),(\mathtt{aabaab},\mathtt{\$})\}$.  It has several (nine) possible SSSs, such as $\mathtt{a\underline{a}\underline{b}aab\underline{a}ab\underline{\$}}$, $\mathtt{a\underline{a}baa\underline{b}\underline{a}ab\underline{\$}}$, etc. 
	\end{example}

	The size of an SSS, or equivalently, the size of $\sext(w)$, denoted $\chi$, is a measure of repetitiveness~\cite{navarro21indexing1} of a string, as presented in the introduction. Whereas $\chi$ is incomparable with common ``copy-paste measures'' such as measures based on Lempel--Ziv compression or context-free grammar representation,  in the sense that it can happen to be asymptotically smaller or asymptotically larger than those \cite{navarro21indexing1}, it is still a meaningful measure of repetitiveness. 
	In particular, we have $\gamma\leq \chi$ on the one hand, and $\chi\leq 2r$ on the other hand, where $r$ is the repetitiveness measure based on Burrows--Wheeler transform \cite{DBLP:conf/spire/NavarroRU25}. 
	
	For example, some remarkable words studied in word combinatorics, such as Fibonacci words, have $\chi=O(1)$, which reflects their high repetitiveness \cite{DBLP:conf/spire/NavarroRU25}. 
	On the other hand, for a de Bruijn sequence of length $n$, we have $\chi=\Theta(n)$, which shows a difference with copy-paste measures, which are all bounded by $O(n/\log n)$. 
	
	\section{Algorithmic tools}
	\label{sec:algotools}
	\subsection{Weiner's suffix tree algorithm: an outline}
	\label{sec:weiner}
	We assume the reader to be familiar with suffix trees, see e.g. \cite{gusfield97algorithms}. 	
	Consider the suffix tree for a given text $w$. 
	Edges are labeled by substrings of $w$, which we call \textit{edge labels}.
	We call an edge an $x$-edge, for $x\in\Sigma$, if its edge label starts with $x$. 
	Each substring $u\in\fact_w$ corresponds to a \textit{locus} in the suffix tree, denoted $\locus(u)$, which can be found by spelling $u$ starting from the root of the tree. $\locus(u)$ is either an internal {node} or a leaf of the tree (hereafter called simply a \textit{node}), or an \textit{implicit node} specified by an offset from the parent node of some edge. 
	The substring whose locus is a node $\alpha$ is called its \textit{label} and is denoted $\strlabel(\alpha)$. 
	
	Weiner's algorithm constructs the suffix tree for a text in a ``reversed online'' fashion, i.e. by processing the text right-to-left and after reading a letter of the text, updating the suffix tree by inserting a new suffix of the text. We will rely on Weiner's algorithm in this work. Here we focus only on the features of Weiner's algorithm relevant to the present work, for its full description we refer the reader to \cite{breslauer13near,KopplKucherov2026}. 
	
	Weiner's algorithm maintains a suffix tree augmented with Weiner's links, or \textit{W-links} for short. W-links are pointers indexed by alphabet letters: a node $\alpha$ has a defined W-link by a letter $x\in\Sigma$ if $x \cdot \strlabel(\alpha)$ is a substring of $w$. We call it an $x$-link and define $W_x(\alpha)$ to be the destination of the $x$-link from $\alpha$. W-links are used by Weiner's algorithm to jump from $\alpha$ to $\gamma=\locus(x \cdot \strlabel(\alpha))$.  If $\gamma$ is a node of the tree, then $W_x(\alpha)=\gamma$ and the link is called \textit{hard}. If $\gamma$ is an implicit node, the link is called \textit{soft} and $W_x(\alpha)$ is defined to be the closest descendant node\footnote{This definition of soft links does not match the original paper \cite{weiner73linear} and has been introduced later in work~\cite{breslauer13near} that we comment on in the next section. We use this definition in our paper as well.} of $\gamma$. 
	Figure~\ref{fig:suffixient_example} in \cref{app:figs} shows the suffix trees for the two strings from Example~\ref{ex:suffixient_example}. It also shows which W-links are defined for each node and the loci of the SREs.
	
	Consider a string $w$ ending with a unique sentinel symbol $\$$. The following lemma summarizes properties of W-links that will be useful to us. 
	\begin{lemma}
		\label{lem:SREreversed}
		The following statements hold.
		\begin{enumerate} [(i)]
			\item Every node has at least one defined W-link except for the leaf node labeled by the entire string $w$. \label{case:SREreversed1}
			\item For any $x\in\Sigma$, if an $x$-link is defined for a node $\nu$, then it is defined for any ancestor of $\nu$. \label{case:SREreversed2}
			\item If a node $\nu$ has a defined soft $x$-link, there exists a descendant node of $\nu$ with a defined hard $x$-link. 
			\label{case:SREreversed3}
		\end{enumerate}
	\end{lemma}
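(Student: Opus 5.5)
We argue directly from the definition of W-links: a node $\nu$ has a defined $x$-link if and only if $x\cdot\strlabel(\nu)\in\fact_w$. Throughout we use that $w$ ends with the unique sentinel $\$$, so every suffix of $w$ has a leaf, and every internal node of the suffix tree is either the root or a right-maximal substring.

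For (\ref{case:SREreversed1}), split into internal nodes and leaves. Let $\nu$ be an internal node other than the root, with $u=\strlabel(\nu)$. Then $u$ is right-maximal, so $ux,uy\in\fact_w$ for some $x\neq y$. At most one occurrence of $u$ starts at position $1$, so some occurrence of $u$ is preceded by a letter $z$, and then $zu\in\fact_w$. More carefully: if every occurrence of $u$ started at position $1$, then $u$ would occur exactly once and could not be right-maximal. The root has label $\varepsilon$, and since $|w|\geq 1$, any letter of $w$ gives a defined link. For a leaf, the label is a suffix $w[i..n]$. If $i>1$, then $w[i-1]$ gives a link. If $i=1$, the label is $w$ itself, which is the excluded case. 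Note that, thanks to the sentinel, a leaf label is never a prefix of another suffix, so its occurrence is unique. This is why the exclusion is exactly the leaf labelled $w$.

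For (\ref{case:SREreversed2}), let $\mu$ be an ancestor of $\nu$. Then $\strlabel(\mu)$ is a prefix of $\strlabel(\nu)$, so $x\cdot\strlabel(\mu)$ is a prefix of $x\cdot\strlabel(\nu)\in\fact_w$. Hence $x\cdot\strlabel(\mu)\in\fact_w$ as well.

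For (\ref{case:SREreversed3}), which needs the most care, suppose $\nu$ has a soft $x$-link with $u=\strlabel(\nu)$. Then $xu$ is an implicit node, so $xu$ is not right-maximal and is not a leaf label. Let $\gamma=W_x(\nu)$ be the nearest descendant node of $\locus(xu)$, with label $xuv$ for some nonempty $v$. We claim the node $\locus(uv)$ exists and carries a hard $x$-link.
\begin{itemize}
\item If $\gamma$ is internal, then $xuv$ is right-maximal, so $uv$ is right-maximal too. Hence $uv$ is a node, and $x\cdot uv=\strlabel(\gamma)$ is a node, so its $x$-link is hard.
\item If $\gamma$ is a leaf, then $xuv$ is a suffix of $w$, so $uv$ is a suffix and has a leaf. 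Its $x$-link reaches the leaf $\gamma$, so it is again hard.
\end{itemize}
In both cases $\locus(uv)$ is a descendant of $\nu$, since $u$ is a prefix of $uv$, and it is a proper descendant because $v\neq\varepsilon$. The step to check is that the node label of $\gamma$ really has the form $xuv$, that is, that it starts with $x$. This holds because $\gamma$ lies in the subtree below $\locus(xu)$, so its label extends $xu$.
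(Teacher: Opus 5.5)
Your proof is correct. Parts (i) and (ii) follow the paper's argument; you only treat the root separately, which the paper folds into ``internal nodes have at least two occurrences''.

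For (iii) you take a genuinely different route. The paper's argument does not use the softness at all. Since $x\cdot\strlabel(\nu)$ occurs in $w$, the label $u$ extends to a suffix of $w$ preceded by $x$. The leaf of that suffix is a descendant of $\nu$, and its $x$-link points to another leaf, so it is hard. This is shorter and needs no case split. It also shows that every node with a defined $x$-link has a descendant leaf with a hard $x$-link.

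You instead start from the target $\gamma=W_x(\nu)$, with label $xuv$, and pull it back along the link. Using right-maximality when $\gamma$ is internal, or the sentinel when $\gamma$ is a leaf, you show that $\locus(uv)$ is a node whose hard $x$-link lands exactly on $\gamma$. This costs a case distinction but gives more: the hard link you find points to the same node as the soft link $W_x(\nu)$.

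One easy extra step makes $\locus(uv)$ the closest such descendant. Any descendant $\mu$ of $\nu$ with a hard $x$-link has $x\cdot\strlabel(\mu)$ a node strictly below the implicit $\locus(xu)$, hence in the subtree of $\gamma$. With that added, your argument proves the stronger property the paper invokes without proof at the end of Section~\ref{sec:fringe} to recover soft links from hard ones.
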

	\begin{proof}
		\begin{enumerate} [(i)]
			\item The label of every internal node has at least two occurrences in $w$ and therefore at least one preceded by some letter. Every leaf is labeled by a suffix that is preceded by some letter as well, with the only exception of the leaf labeled by the entire string. 
			\item The claim follows from the observation that if a substring has an occurrence preceded by a letter $x$, then this holds for any of its prefixes as well. 
			\item Due to the unique terminal symbol $\$$, the locus of any suffix of $w$ is a leaf, and therefore any leaf except for the one labeled by the entire string $w$ has a single hard W-link pointing to another leaf. Assume now an internal node $\nu$ has a soft $x$-link. Its label $u=\strlabel(\nu)$ must extend to a suffix preceded by $x$. The locus of this suffix is a leaf which is a descendant of $\nu$ and has a defined hard $x$-link. 
		\end{enumerate}
	\end{proof}

	We now outline a round of Weiner's algorithm transforming the suffix tree for a text $w$ to the one for $xw$ for some $x\in\Sigma$. We refer to Figure~\ref{fig:weiner_suffix_update} in our description. The algorithm maintains the leaf $\lambda$ whose label is the entire string $w$. A key step of the algorithm is to locate the node $\alpha$ that is the lowest ancestor of $\lambda$ with a defined $x$-link. That is, $\strlabel(\alpha)$ is the longest prefix of $w$ that has another occurrence in $w$ preceded by $x$. In the original version \cite{weiner73linear}, this is done by a straightforward upward traversal of the ancestors of $\lambda$ until finding a node with a defined $x$-link. Although on an individual letter this work can take time up to $O(n)$, it takes $O(1)$ time per letter \emph{amortized} over the entire run of the algorithm. 
	
	The goal of locating $\alpha$ is to use its $x$-link in order to locate the node $\gamma=W_x(\alpha)$ to which a new leaf $\lambda'$ should be attached, such that $\lambda'$ has the string label $xw$. 
	Here two cases occur, depending on whether $\alpha$'s $x$-link is hard or soft, and, respectively, whether (a) node $\gamma$ is an existing node or (b) an implicit one (i.e., a locus on an edge). 
	In the latter case~(b), $\gamma$ is created by ``splitting'' the corresponding edge (see Figure~\ref{fig:weiner_suffix_update}). In both cases~(a) and~(b), a new leaf of $\gamma$ is created labeled by $xw$ and the new edge is a $y$-edge where $y$ is the letter following the prefix occurrence of $\strlabel(\alpha)$ in $w$. 
	
	On top of this, we have to update certain W-links. 
	We create a \textit{hard} $x$-link at $\lambda$ and, if necessary, harden the $x$-link of $\alpha$, spending constant time. 
	Furthermore, we have to set or modify a non-constant number of \textit{soft} W-links that are divided into two groups. 
	One group consists of $O(\sigma)$ W-links that should be assigned to the newly created node $\gamma$ by copying the W-links of its child node $\beta$. 
	The other group consists of new $x$-links at nodes on the path between $\lambda$ and $\alpha$ and modified existing $x$-links at nodes on the path between $\alpha$ and $\delta$. %

		\begin{figure}[t]
		\centering
			\includegraphics[width=0.45\textwidth,page=1]{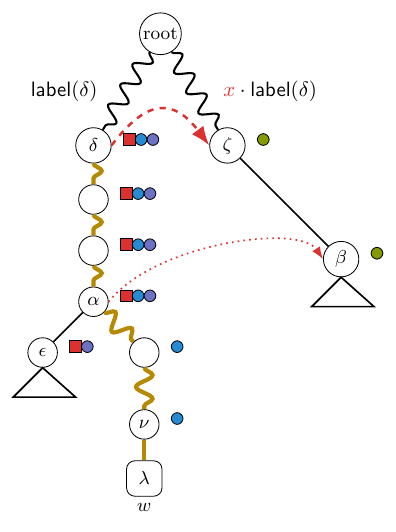}
			\hspace{2em}
			\includegraphics[width=0.45\textwidth,page=2]{suffixientimg/suffixupdate.pdf}
		\caption{%
    One round of Weiner's suffix tree construction algorithm: updating $\STree(w)$ (left) to $\STree(xw)$ (right) by inserting the new suffix $xw$.
			W-links are visualized by colored boxes and circles next to nodes.
			A red box ({\color{solarizedRed}$\blacksquare$}) at a node represents an $x$-link.
			Blue ({\color{solarizedBlue}$\CIRCLE$}), violet ({\color{solarizedViolet}$\CIRCLE$}) and green ({\color{solarizedGreen}$\CIRCLE$}) circles represent \textit{sets} of W-links that can, in general, be growing towards the root. Note also that blue and violet sets are not necessarily disjoint. 
			Explicitly drawn are some $x$-links: Dashed red arrows represent hard $x$-links, dotted red arrows represent soft $x$-links.
    Curly edges represent paths that may contain multiple nodes.
    Node $\delta$ is the lowest ancestor of $\lambda$ with a \emph{hard} $x$-link, and node $\alpha$ is the lowest ancestor of $\lambda$ with a (not necessarily hard) $x$-link. 
    $\alpha$'s $x$-link can be soft (as in the figure) or hard (in case $\alpha=\delta$). 
    If this link is soft, the algorithm creates a new node $\gamma$ that is the insertion point for the leaf~$\lambda'$. 
    If this link is hard, the insertion point is $W_x(\alpha)$. 
    After creating $\gamma$, the W-links on the golden %
    thick curly path from $\lambda$ to $\delta$ need to be updated.
	}
		\label{fig:weiner_suffix_update}
	\end{figure}
	
	\subsection{Fringe colored ancestor problem}
	\label{sec:fringe}
	
	In their seminal paper \cite{breslauer13near}, Breslauer and Italiano proposed an implementation of Weiner's algorithm working in \textit{near-real-time}, that is providing a small (double logarithmic) worst-case time bound on an individual round. 
	Their main idea is to speed up the retrieval of $\alpha$ from $\lambda$ by answering a query called \textsc{fringe marked ancestor} on dynamic trees. 
	By Lemma~\ref{lem:SREreversed}(\ref{case:SREreversed2}), nodes with a defined $x$-link, for a given $x\in\Sigma$, form a connected subtree of the suffix tree sharing the same root; 
	the leaves of this subtree are called the \emph{fringe}.
	Retrieving $\alpha$ 
	amounts to computing the lowest ancestor of the fringe. 
	We define a more general version of this problem where we simultaneously store information about W-links for all letters (expressed by \emph{colors}\footnote{We introduce the term \emph{colors} here as a means of abstraction since they will also play another role, besides being synonymous to letters, later in \cref{sec:maintainingSREs}.}), as opposed to a separate data structure for each letter (as in \cite{breslauer13near}). 
	We call this problem \textsc{fringe colored ancestor} and define it as follows.
	
	\myproblem{Fringe colored ancestor}
	Consider a set of colors $\colors$ and a dynamic tree initially consisting of a single uncolored node. Maintain the tree under the following operations:\footnote{The set of operations can be larger and include deletion and uncoloring of a node; here we only consider operations that we need for our purposes.}
	\begin{itemize}
		\item insert a new uncolored child of an existing node,
		\item assign a color $x\in\colors$ to a node $\nu$ provided that either $\nu$ is the tree root or the parent of $\nu$ is colored with $x$,
		\item introduce a new node by splitting an edge and coloring this node by the colors assigned to its descendant,
		\item for a given node and a given color $x\in\colors$, find its lowest ancestor colored with $x$. 
	\end{itemize}
	In particular, \textsc{fringe marked ancestor} is \textsc{fringe colored ancestor} with a single color. 
	The updates preserve the \emph{fringe property}, i.e., ancestors of a node colored by $x$ are colored by $x$ as well. 
	The fringe property is important for modelling W-links since they also obey the fringe property by Lemma~\ref{lem:SREreversed}(\ref{case:SREreversed2}).
	
	Generalizing \cite{breslauer13near}, \textsc{fringe colored ancestor} can be solved by reducing it to \textsc{colored predecessor} in dynamic lists:

	\myproblem{Colored predecessor}
	Maintain a dynamic linked list $\mathcal{L}$ where each element is assigned a \textit{color} from a set $C$ of possible colors. Possible updates to $\mathcal{L}$ are insertions of an element after a given element. %
	A colored predecessor query $\mathcal{L}.\pred(\nu,x)$, for an element $\nu \in \mathcal{L}$ of the list and a color $x\in C$, 
	asks for the closest preceding element of $\nu$ in $\mathcal{L}$ colored by $x$. 
	The colored successor $L.\suc(\nu,x)$ is defined symmetrically.

	To explain the reduction from \textsc{fringe colored ancestor} to \textsc{colored predecessor}, we need another operation: 
	the \emph{lowest common ancestor} of two nodes $\mu$ and $\nu$ of a tree, denoted $\lca(\mu,\nu)$. 
	It is known that lowest common ancestor queries can be answered in constant time even on a dynamic tree undergoing insertion and deletion of nodes~\cite{cole05dynamic}. 
	
	We further consider that a node colored by $x\in C$ can be either \emph{explicitly $x$-colored} or  \emph{implicitly $x$-colored}, but we require that a node is implicitly $x$-colored if and only if at least one of its (not necessarily immediate) descendants is explicitly $x$-colored. We call a node \textit{$x$-colored} if it is either explicitly or implicitly $x$-colored. 
	Observe that $x$-colored nodes fulfill the fringe property and each $x$-colored node that has no $x$-colored descendants must be explicitly $x$-colored. 
	
	Consider the linked list~$\mathcal{L}$ storing the  \textit{Euler tour} of the current tree, 
	where each node of the tree has two copies in the list corresponding to its first and last visits. 
	Thus, each element of the list is a node~$\nu$, which we augment by the set of the \textit{explicit colorings} of $\nu$. 
	Then the following holds --- see also Figure~\ref{fig:KN} for an illustration.

		\begin{figure}[h]
		\centering
		\includegraphics[width=0.3\textwidth,page=1]{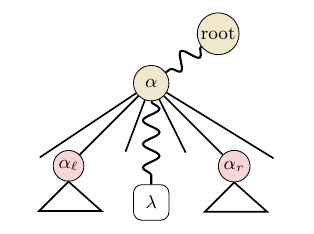}
		\fbox{\includegraphics[scale=0.6,page=2]{suffixientimg/explicitcolor.pdf}}
		\includegraphics[width=0.3\textwidth,page=3]{suffixientimg/explicitcolor.pdf}
		\fbox{\includegraphics[scale=0.6,page=4]{suffixientimg/explicitcolor.pdf}}
		\caption{Illustration to Lemma~\ref{lem:KN}. Explicitly colored nodes are shown in red ({\color{solarizedRed!20}$\blacksquare$}) and implicitly colored nodes are shown in yellow ({\color{solarizedYellow!20}$\blacksquare$}). The Euler tour list~$\mathcal{L}$ is shown next to the tree in top-down order. The tree on the left shows the case when $\lambda$ is between two descendants $\alpha_\ell$ and $\alpha_r$ of $\alpha$ in the Euler tour, and the tree on the right shows the case when all successors of $\lambda$ inside the subtree rooted at $\alpha$ are not explicitly colored (the case that all predecessors of $\lambda$ inside the subtree rooted at $\alpha$ are not explicitly colored is symmetric).
	}
		\label{fig:KN}
	\end{figure}

	\begin{lemma}[{\cite[Lemma 5]{kucherov17fullfledged}}]
		Let $\lambda$ be a node that is not $x$-colored and let $\alpha$ be the lowest $x$-colored ancestor of $\lambda$. Consider the Euler tour list~$\mathcal{L}$ in which only explicit colorings are represented, and let $\alpha_\ell =\mathcal{L}.\pred(\lambda,x)$ and $\alpha_r = \mathcal{L}.\suc(\lambda,x)$. Then $\alpha$ is the lowest node between $\lca(\alpha_\ell,\lambda)$ and $\lca(\lambda,\alpha_r)$. 
		\label{lem:KN}
	\end{lemma}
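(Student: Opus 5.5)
Let $T_\alpha$ denote the subtree rooted at $\alpha$. The plan is to read everything off the Euler tour $\mathcal{L}$: the copies of the nodes of $T_\alpha$ form a contiguous interval of $\mathcal{L}$ delimited by the first and last copy of $\alpha$. Both copies of $\lambda$ lie strictly inside this interval, since $\lambda \ne \alpha$ ($\alpha$ is $x$-colored and $\lambda$ is not). So I will locate $\alpha_\ell$ and $\alpha_r$ relative to this interval. The query may be taken at either copy of $\lambda$; the argument is the same.

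\textbf{Key structural fact.} Let $c$ be the child of $\alpha$ on the path to $\lambda$. Then no node of $T_c$ is explicitly $x$-colored. Indeed, if some node of $T_c$ were explicitly $x$-colored, then $c$ would be $x$-colored by the fringe property and the implicit-coloring rule. This contradicts $\alpha$ being the lowest $x$-colored ancestor. Since $\lambda$ is not $x$-colored, its own subtree also contains no explicit $x$-coloring.

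\textbf{Case analysis.} Consider $\alpha_\ell$.
\begin{itemize}
\item \emph{$\alpha_\ell$ lies inside the interval of $T_\alpha$.} Then $\alpha_\ell \in T_\alpha \setminus T_c$, because it is explicitly $x$-colored and lies before the copy of $\lambda$ (it could be $\alpha$ itself, or a node in a subtree of an earlier sibling of $c$). Hence $\lca(\alpha_\ell,\lambda)=\alpha$.
\item \emph{$\alpha_\ell$ is undefined or lies outside the interval.} Then $\lca(\alpha_\ell,\lambda)$ is a proper ancestor of $\alpha$, or undefined.
\end{itemize}
The same two alternatives hold symmetrically for $\alpha_r$.

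\textbf{Existence of a witness.} It remains to show that at least one of $\alpha_\ell,\alpha_r$ falls inside $T_\alpha$. Since $\alpha$ is $x$-colored, it is either explicitly $x$-colored or has an explicitly $x$-colored descendant. Either way there is an explicitly $x$-colored element $\beta$ of $\mathcal{L}$ inside the interval of $T_\alpha$. Such a $\beta$ could be a copy of $\alpha$ itself; both copies of $\alpha$ carry the explicit colorings. Moreover $\beta \notin T_c$, and $\beta$ lies either before or after $\lambda$. If $\beta$ lies before $\lambda$, then $\alpha_\ell$ is at least as close to $\lambda$ as $\beta$, so $\alpha_\ell$ is in the interval. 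If $\beta$ lies after $\lambda$, the same holds for $\alpha_r$.

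\textbf{Conclusion.} Both $\lca(\alpha_\ell,\lambda)$ and $\lca(\lambda,\alpha_r)$ are ancestors of $\lambda$ (when defined), so they are comparable. By the case analysis, each of them is either $\alpha$ or a proper ancestor of $\alpha$, and at least one of them equals $\alpha$. Therefore the lower of the two is exactly $\alpha$.

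\textbf{Main obstacle.} The delicate step is the structural fact: that $T_c$ contains no explicit $x$-coloring, and that copies of $\alpha$ itself count as inside the interval. This needs careful use of the requirement that implicit coloring is equivalent to having an explicitly colored descendant, together with the fringe property.
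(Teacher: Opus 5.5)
Your proof is correct and follows essentially the same approach as the paper's: subtrees occupy contiguous intervals of the Euler tour, and since $\alpha$ is $x$-colored its interval contains an explicitly $x$-colored element, which forces $\alpha_\ell$ or $\alpha_r$ into that interval, while nothing in the subtree $T_c$ toward $\lambda$ is explicitly colored. The only difference is presentational: the paper argues by contradiction (a lower colored ancestor would have its whole interval strictly between $\alpha_\ell$ and $\alpha_r$), whereas you argue directly and spell out the witness $\beta$ and the fact about $T_c$, which the paper's terse proof leaves implicit.
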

	\begin{proof}
		Observe that since $\lambda$ is not $x$-colored, no descendant of $\lambda$ is $x$-colored either, and therefore successor and predecessor operations are unambiguously defined. 
		Since $\alpha$ is a marked ancestor of $\lambda$, it remains to show that it is the lowest one. This is immediate by contradiction: if there was a lower marked ancestor, its first occurrence in the Euler tour would have been after the occurrence of $\alpha_\ell$ and its second occurrence would have been earlier than the occurrence of $\alpha_r$. 
	\end{proof}

	\hili{Lemma~\ref{lem:KN} allows us to answer fringe colored ancestor queries by dealing only with a subset of colored nodes -- explicitly colored ones -- as long as the requirement on explicit coloring is satisfied.
	}
	Lemma~\ref{lem:KN} %
	is naturally applied to the suffix tree if we define explicitly and implicitly $x$-colored nodes to be nodes with a hard $x$-link and a soft $x$-link, respectively. Lemma~\ref{lem:SREreversed}(\ref{case:SREreversed3}) ensures that the definition of implicit coloring is verified. It follows that to locate $\alpha$, soft W-links are not needed, and corresponding colorings can be omitted in the Euler tour. 
	This observation was used in  \cite{kucherov17fullfledged} to avoid storing soft links altogether and retrieve $\alpha$ based on hard links only, even if the $x$-link of $\alpha$ is soft. 
	
	Moreover, the authors of \cite[Thm.~1]{kucherov17fullfledged} showed that hard W-links are sufficient not only to locate $\alpha$ but also to recover the $x$-link $W_x(\alpha)$ needed to locate the insertion point $\gamma$ (Section~\ref{sec:weiner}). This is possible due to a stronger version of Lemma~\ref{lem:SREreversed}(\ref{case:SREreversed3}): not only a node $\nu$ with a soft $x$-link has a descendant with a hard $x$-link, but the unique closest such descendant has a hard $x$-link pointing to the same node as $W_x(\nu)$ (see also \cite[Observation 15(c)]{DBLP:journals/corr/abs-1302-3347}).
	As a consequence, maintaining {soft} W-links can be avoided altogether, which greatly simplifies the implementation of Weiner's algorithm. In particular, the work of setting or updating soft W-links (see Section~\ref{sec:weiner}) vanishes. We stick to this implementation in our work. %

	\subsection{Colored predecessor queries on lists}
	\label{sec:predecessor}
	
	As follows from the above, \textsc{colored predecessor} is the main tool for implementing Weiner's algorithm with worst-case guarantees.
	A data structure for solving this problem has been presented by Mortensen~\cite{mortensen06fully}.

	\begin{theorem}[\cite{mortensen06fully} Theorem 15]\label{thm:color-pred}
		For \textsc{colored predecessor}, there is an $O(n)$-space data structure supporting both updates and queries in worst-case time $O(\log^2\log n)$ where $n$ is the size of the list. 
		\label{thm:Mortensen}
		\end{theorem}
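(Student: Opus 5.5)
Since this statement is imported from Mortensen~\cite{mortensen06fully}, the plan below reconstructs a proof in his style rather than proving anything new. The idea is a van Emde Boas-like recursive decomposition of the \emph{list} instead of the key universe. We cannot use integer keys, because list positions change under insertions. So we first attach order-maintenance labels to the elements (Dietz--Sleator style, with $O(1)$ worst-case comparisons) and use them only to compare positions. The structural work is done by a hierarchy of blocks. The list of size $n$ is cut into $\Theta(\sqrt n)$ contiguous blocks of size $\Theta(\sqrt n)$, and each block is cut recursively in the same way. This gives $O(\log\log n)$ levels.

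At a level with parent block $B$, each child block is represented by one summary element. That summary element is ``colored'' by every color occurring in the child block. For each color $x$ occurring in $B$, we keep a dynamic predecessor structure (a y-fast trie or vEB tree over child indices, stored with hashing) on the children of $B$ that contain $x$. We also store each child's first and last $x$-occurrence.

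A query $\mathcal{L}.\pred(\nu,x)$ then proceeds as follows. We first recurse inside the child block containing $\nu$. If there is no $x$-element before $\nu$ in that child, we ask the per-color structure of the parent for the nearest preceding child containing $x$, and return that child's last $x$-occurrence. Because the first case recurses and the second terminates, the query touches only one block per level. Each step costs one predecessor query of $O(\log\log n)$ time, and there are $O(\log\log n)$ levels, giving $O(\log^2\log n)$ in total.

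An insertion of an element of color $x$ after $\mu$ is applied at every level along the path of blocks containing $\mu$. At each level it updates at most one per-color predecessor structure (inserting the child index into color $x$'s structure if $x$ was absent) and possibly the first or last occurrence. This again costs $O(\log\log n)$ per level. Space can be charged as follows: each element contributes one colored summary entry per level, but only when it is the first or last occurrence of its color in its block. With bit-trick packing and counting only distinct (block, color) pairs, this should give $O(n)$ total. Achieving linear rather than $O(n\log\log n)$ space may need an extra level of indirection, with bottom blocks of size $\Theta(\log\log n)$ handled by scanning or table lookup.

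The main obstacle is worst-case dynamization. Blocks grow under insertions and must be split. Splitting a block invalidates the child indices used in the parent's per-color predecessor structures, and it moves whole sub-hierarchies. I would handle this by global rebuilding done incrementally. When a block reaches twice its nominal size, we start building its two halves in the background, spending $O(1)$ extra work per subsequent update. Child indices are taken from order-maintenance labels with enough slack, so that a split only inserts new indices rather than renumbering old ones. Until a rebuild finishes, queries consult both the old and the partially built copies. Verifying that these rebuilds finish before a block overflows again, and that at most $O(1)$ rebuilds are concurrently active per level, is the delicate part. This is where I expect the bulk of the technical argument to lie, as in Mortensen's paper.
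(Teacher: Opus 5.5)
The paper gives no proof of this statement. It is imported verbatim from \cite[Theorem~15]{mortensen06fully} and used as a black box, so the citation is the whole argument. Your reconstruction correctly explains where $O(\log^2\log n)$ comes from: $\Theta(\log\log n)$ levels of a square-root decomposition of the list, with at most one $O(\log\log n)$ predecessor search per level. Your query procedure and per-level insertion logic are sound. But as a proof it leaves open the two features that make the statement nontrivial.

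\textbf{Space.} Every level stores one entry per pair (child block, color occurring in it), and the number of such pairs per level is bounded only by $n$. Take all colors distinct. Then every element is the first and last occurrence of its color in every block containing it. So each of the $\Theta(\log\log n)$ levels carries $\Theta(n)$ entries, i.e.\ $\Theta(n\log\log n)$ words in total. Indirection with bottom blocks of size $\Theta(\log\log n)$ does not fix this. It leaves $\Theta(\log\log n)$ levels above, and grouping elements does not reduce the number of distinct (block, color) pairs that those levels must index. Linear space needs a further idea that you do not supply. One possibility is to exploit that child indices and block-local color names at level $i$ need only $O(2^{-i}\log n)$ bits, so packed entry sizes decrease geometrically; but this requires packed dynamic structures with the same worst-case bounds.

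\textbf{Worst-case updates.} A split of a level-$i$ block of size $s$ forces up to $\Theta(s)$ insertions into the parent's per-color structures. You assert, but do not show, that this can be spread over earlier insertions at all levels simultaneously while queries stay correct. More concretely, your device for stable child indices fails. A parent of size $S$ absorbs $\Theta(S)$ insertions before it splits. If all of them hit the same spot, $\Theta(\sqrt{S})$ children are created one after another in the same gap. Gap labels ``with enough slack'' would then need a universe of size $2^{\Theta(\sqrt{S})}$, on which a y-fast trie costs $\Theta(\log S)$ rather than $O(\log\log n)$. Any relabeling instead forces re-keying a child in every per-color structure it belongs to, which can be as many as its size.

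Moreover, standard y-fast tries (or hashed van Emde Boas trees) give $O(\log\log n)$ updates only amortized and in expectation. So even a single level lacks the worst-case bound you charge it. These are exactly the points you defer to Mortensen, so the proposal is a heuristic account of the bound rather than a proof. In this context the right move is simply to invoke \cite[Theorem~15]{mortensen06fully}, as the paper does.
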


The bound of Theorem~\ref{thm:Mortensen} can be improved if the color space is log-size. 
\begin{theorem}[\cite{DBLP:journals/talg/GiyoraK09}, Theorem 5.2]\label{thm:color-pred-log}
	Assuming $|C|=O(\log^{1/4}n)$, \textsc{colored predecessor} can be solved in worst-case time $O(\log\log n)$ for both updates and queries, using an $O(n)$-space data structure. 
		\label{thm:GioraKaplan}
\end{theorem}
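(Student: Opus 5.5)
We will not re-derive this result; it is quoted from~\cite{DBLP:journals/talg/GiyoraK09}. Below is how we would reconstruct a proof. The idea is to exploit that a set of $|C|=O(\log^{1/4}n)$ colors fits in a small fraction of a machine word. We would use a two-level decomposition of $\mathcal{L}$: \emph{bottom} blocks handled by word-level parallelism, and a \emph{top} structure over blocks handled by per-color dynamic predecessor structures keyed by order-maintenance labels.

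\emph{Bottom level.} Partition $\mathcal{L}$ into consecutive blocks of size between $\frac12\log^{1/2}n$ and $2\log^{1/2}n$. Inside a block, store each element's color set as a $|C|$-bit mask. Concatenated in list order, the masks of a block occupy $O(\log^{1/2}n\cdot\log^{1/4}n)=o(\log n)$ bits, i.e., one word. A local position index inside the block can be kept in another packed word. Inserting an element then amounts to a shift-and-or on these words. The query ``closest element before position $i$ in this block whose mask has bit $x$'' is answered in $O(1)$: either with a precomputed table of size $2^{o(\log n)}$, or with a mask multiplication followed by a most-significant-bit computation. When a block overflows, it is split. To keep updates worst-case, the split is performed incrementally in the standard way.

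\emph{Top level.} Maintain the blocks in an order-maintenance structure with labels from a polynomial universe; for worst-case bounds we would use the deamortized Dietz--Sleator/Willard scheme. For each color $x$, keep a worst-case dynamic predecessor structure (e.g., a deamortized van Emde Boas/$y$-fast trie) over the labels of blocks containing an $x$-colored element; each such operation costs $O(\log\log n)$. A query $\pred(\nu,x)$ proceeds in two steps. First, search $\nu$'s own block in $O(1)$. If that fails, find the predecessor block of color $x$ in $O(\log\log n)$ and take the last $x$-element inside it in $O(1)$. An insertion touches one block, and it changes top-level membership for at most the one newly colored color.

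\emph{Main obstacle: relabeling.} When block labels change, every per-color structure containing that block must be updated. That costs up to $|C|\cdot O(\log\log n)$ per relabeled block. Our first attempt is to make the number of blocks $n/\mathrm{polylog}(n)$ by adding a middle level of superblocks of size $\log^{2}n$, using the same bit-packing applied to block-level color summaries. Relabelings of superblocks happen only once every $\Omega(\log^2 n)$ insertions. We then spread the resulting $O(|C|\log\log n)\cdot\mathrm{polylog}$ repair work of a relabeling over those insertions, keeping old and new labels valid simultaneously during the transition. With this schedule, each insertion pays $O(\log\log n)$ worst case. Making this background rebuilding consistent with concurrent queries is the delicate part; the rest is standard. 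Space is $O(n)$ because each top-level key appears in at most $|C|$ structures over only $n/\mathrm{polylog}(n)$ superblocks.
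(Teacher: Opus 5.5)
\textbf{There is a genuine gap.} The paper does not reprove this statement either, but its citation comes with a caveat that you skip. In \cite{DBLP:journals/talg/GiyoraK09} the $O(\log\log n)$ update bound is only amortized. The paper obtains the worst-case statement by noting that only insertions are needed and deamortizing them with the standard techniques of \cite{kucherov17fullfledged}. Quoting the theorem alone therefore does not give worst-case updates, so your reconstruction is where that content would have to come from, and as written it does not close.

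Your bottom level (one-word blocks of $\Theta(\log^{1/2}n)$ elements with $O(1)$ in-block search) is fine. The middle level fails as stated. A superblock of $\log^2 n$ elements contains $\Theta(\log^{3/2}n)$ blocks, so its block summaries need $\Theta(|C|\log^{3/2}n)$ bits. That is $\Theta(\log^{3/2}n)$ bits even for one color and $\Theta(\log^{7/4}n)$ bits for $|C|=\Theta(\log^{1/4}n)$, i.e.\ $\omega(1)$ words of $\Theta(\log n)$ bits. Hence ``the same bit-packing'' cannot find the last $x$-colored block of a superblock in $O(1)$ time. You would need either a constant number of packed levels, each of fan-out at most $\Theta(\log n/|C|)=\Theta(\log^{3/4}n)$ and each with its own incremental splitting, or a per-color predecessor structure inside every superblock. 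The latter brings the relabeling problem back one level down.

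The top-level accounting also does not close. Your ``$\mathrm{polylog}$'' repair factor must stay below $\log^2 n/|C|$. Your space argument counts keys, not the arrays of the van Emde Boas trees. The standard deterministic vEB tree gets worst-case $O(\log\log U)$ operations with $\Theta(U)$ space, and $y$-fast tries avoid this only through hashing, which makes the bounds randomized. This leaves two options, and neither works with your parameters:
\begin{itemize}
\item With a polynomial label universe, as in Dietz--Sleator-type order maintenance, the $|C|$ per-color vEB trees take superlinear space.
\item If you shrink the universe to $O(n/|C|)$ to get $O(n)$ space, known worst-case labeling schemes (e.g.\ Willard's) may relabel up to $O(\log^2 N)$ superblocks per insertion. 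Each relabel costs $|C|\cdot O(\log\log n)$, so with superblocks of size $\log^2 n$ you only get $O(|C|\log\log n)=O(\log^{1/4}n\log\log n)$ per element insertion.
\end{itemize}
Superblocks must therefore be larger than $|C|$ times the number of relabels per insertion (e.g.\ $\log^3 n$), which makes the packing problem above worse.

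Finally, scheduling splits and relabel repairs so that superblock sizes stay bounded in the worst case while queries remain correct is the deamortization itself, and you leave it open. Either rework the levels and parameters along these lines, or do what the paper does: use the structure of \cite{DBLP:journals/talg/GiyoraK09} with insertions only and deamortize as in \cite{kucherov17fullfledged}.
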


We remark that the result of \cite{DBLP:journals/talg/GiyoraK09} is slightly more general in that it allows querying a subset of colors and not just a single color.  On the other hand, in \cite{DBLP:journals/talg/GiyoraK09}, the $O(\log\log n)$ time bound for insertions and deletions is only amortized. In our case, we do not use deletions, and the insertion operation can be deamortized using standard techniques \cite{kucherov17fullfledged}. A similar result is proved in \cite[Theorem 4.1]{DBLP:conf/soda/Mortensen03} without providing a specific exponent in the bound on the color space size.

	Before concluding with the final complexity bound on a round of Weiner's algorithm, note that it also includes parent-to-child access in the suffix tree, however the complexity of this step is subsumed by \textsc{colored predecessor} (see \cite{KopplKucherov2026} for more details).
	\begin{theorem}
		When the string is processed online right-to-left, the suffix tree can be updated in $O(\log^2\log n)$ worst-case time per letter, for an integer alphabet. 
		This bound can be reduced to $O(\log\log n)$ for an alphabet size $\sigma=O(\log^{1/4}n)$. 
		\label{final-weiner}
		\end{theorem}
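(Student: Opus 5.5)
We analyse one round of Weiner's algorithm, which turns $\STree(w)$ into $\STree(xw)$, and show that every step costs either $O(1)$ or a constant number of operations on the \textsc{colored predecessor} structure. The bound then follows from Theorem~\ref{thm:Mortensen} for general integer alphabets and from Theorem~\ref{thm:GioraKaplan} when $\sigma=O(\log^{1/4}n)$. Following Section~\ref{sec:fringe}, we store only hard W-links. Node $\nu$ is explicitly $x$-colored in the Euler-tour list $\mathcal{L}$ exactly when it has a hard $x$-link; soft links are never stored. Alongside this we keep a dynamic LCA structure~\cite{cole05dynamic}. The colors are the letters, so $|\colors|=\sigma=n^{O(1)}$, which Mortensen's structure can handle.

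The round proceeds in four steps.

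\begin{enumerate}[(1)]
\item \textbf{Locate $\alpha$.} Starting from the current leaf $\lambda$, issue $\mathcal{L}.\pred(\lambda,x)$ and $\mathcal{L}.\suc(\lambda,x)$. Then compute two LCAs and take the lower one; by Lemma~\ref{lem:KN} this is $\alpha$. Lemma~\ref{lem:KN} needs the implicit-coloring invariant, and Lemma~\ref{lem:SREreversed}(\ref{case:SREreversed3}) guarantees it. If neither query returns a node, $\alpha$ is the root and we handle it directly.
\item \textbf{Recover $W_x(\alpha)$.} Use the strengthened form of Lemma~\ref{lem:SREreversed}(\ref{case:SREreversed3}) from \cite{kucherov17fullfledged}: the closest descendant of $\alpha$ with a hard $x$-link points to $W_x(\alpha)$. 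That descendant is one of $\alpha_\ell,\alpha_r$, namely whichever lies in the subtree of $\alpha$. This can be tested in $O(1)$ time by comparing Euler-tour order or by an LCA check. Comparing string depths (node depth plus one) then tells us whether the link is hard. If it is soft, the insertion point $\gamma$ falls inside the edge entering $W_x(\alpha)$ at the computed offset.
\item \textbf{Update the tree.} Create $\gamma$ by splitting that edge if needed, and attach the new leaf $\lambda'$. In the list this means inserting $O(1)$ Euler-tour elements: two copies for $\gamma$ around its child's interval, and two copies for $\lambda'$. Each insertion costs one \textsc{colored predecessor} update. The new node $\gamma$ receives no explicit colors, because it has no hard links, so the $O(\sigma)$ soft-link copying step disappears. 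The LCA structure is updated in $O(1)$.
\item \textbf{Update W-links.} Add the hard $x$-link at $\lambda$, and harden the link at $\alpha$ if $\gamma$ was created (then $W_x(\alpha)=\gamma$ becomes hard). That is at most two color insertions. Each one respects the fringe property and the implicit-coloring invariant, since their ancestors already carry $x$ through implicit coloring. The soft links along the path from $\lambda$ to $\delta$ need no work at all.
\end{enumerate}

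The step most likely to be delicate is inserting child edges. This requires parent-to-child access keyed by the letter $y$, and over an integer alphabet that access must run in worst-case $O(\log^2\log n)$ time, or $O(\log\log n)$ time in the small-alphabet case. I would handle it as in \cite{KopplKucherov2026}. The idea is to find the child from the Euler-tour neighbours of $\lambda$, so that no per-node dictionary is needed. Otherwise, a y-fast-trie-like dictionary with deamortized updates gives the required bound. The other point to check is that Giyora--Kaplan insertions are deamortized, as remarked after Theorem~\ref{thm:GioraKaplan}. With both in place, each round performs $O(1)$ colored predecessor operations and $O(1)$ other work, which gives the stated bounds.
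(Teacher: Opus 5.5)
Your outline is the paper's own argument. You store only hard W-links as explicit colors, find $\alpha$ through Lemma~\ref{lem:KN}, recover soft links as in \cite{kucherov17fullfledged}, charge the cost to Theorems~\ref{thm:Mortensen} and~\ref{thm:GioraKaplan}, and handle child access as in \cite{KopplKucherov2026}. However, your step~(2) fails when the $x$-link of $\alpha$ is hard; this is exactly the detail the paper delegates to \cite{kucherov17fullfledged}.

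When the link is hard, $\alpha$ is itself explicitly $x$-colored and its two Euler-tour copies enclose $\lambda$. Hence \emph{both} $\alpha_\ell$ and $\alpha_r$ lie in the subtree of $\alpha$, yet neither needs to be $\alpha$: they can be hard-linked nodes in sibling subtrees on either side of $\lambda$'s branch. For instance, take $w=\mathtt{abcaacad\$}$, $x=\mathtt{c}$, and children in lexicographic order. Then $\alpha$ is the node $\mathtt{a}$, with a hard link to the node $\mathtt{ca}$, whereas $\alpha_\ell$ and $\alpha_r$ are the leaves $\mathtt{aacad\$}$ and $\mathtt{ad\$}$. Your rule ``whichever lies in the subtree of $\alpha$'' is then ambiguous. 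Whichever $d$ you pick, $W_x(d)$ has string depth larger than $|\strlabel(\alpha)|+1$, so your depth test declares the link soft and computes a wrong insertion point.

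The repair is cheap. Exactly one of $\alpha_\ell,\alpha_r$ lies in the subtree of $\alpha$ if and only if the link is soft. In that case all hard-linked descendants of $\alpha$ lie in the subtree of the closest one, $\beta'$, which hangs off a child of $\alpha$ other than $\lambda$'s. Since both Euler-tour copies carry the color, the element returned is $\beta'$ itself, and your depth test works.

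If both lie in the subtree, the link is hard. Either one of them is $\alpha$, and you read its stored link. Or they lie in two different child subtrees of $\alpha$: neither can be in $\lambda$'s child subtree, by minimality of $\alpha$, and they straddle $\lambda$ in the tour. Then the longest common prefix of their labels is $\strlabel(\alpha)$, so $W_x(\alpha)=\lca(W_x(\alpha_\ell),W_x(\alpha_r))$, which costs one more $O(1)$ LCA query. With this case distinction each round still uses $O(1)$ colored predecessor operations, and your accounting goes through.
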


	\section{Near-real-time computation of smallest suffixient set}
	\label{sec:results}
	Based on efficient implementations of Weiner's algorithm, we now present our near-real-time solutions to the problem of maintaining the set $\sext(w)$ of supermaximal right-extensions (SREs) and of a smallest suffixient set of positions (SSS). We consider two frameworks: when the text is provided online right-to-left and when it is provided left-to-right. 	
	Both frameworks have been previously considered in \cite[Section~7]{DBLP:journals/corr/abs-2506-05638} 
	where the authors described online $O(n)$-time algorithms for both. Those algorithms, however, provide no worst-case time guarantee per individual letter, which is our primary goal in this work. 

	\subsection{Right-to-left framework}
	\label{sec:right-to-left}
	
	In this section, we assume that the text is processed right-to-left, starting from the unique terminal letter $\$$. Weiner's algorithm maintains the suffix tree online in this framework. Below we will show how Weiner's algorithm can be modified to maintain the set $\sext(w)$ and an SSS. 
	
	First, let us clarify how SREs are identified in Weiner's suffix tree. 
	A SRE $(u,x)$ corresponds to a node $\locus(u)$ with a descending $x$-edge (i.e. an edge whose label starts with $x$). The following lemma characterizes those edges. 
	
	\begin{lemma}
		Let $ux\in\fact_w$ for $u\in\Sigma^*$ and $x\in\Sigma$. 
		Then $(u,x)$ is a SRE if and only if 
		\textit{(i)} $\locus(u)$ is a node $\beta$,
		\textit{(ii)} one of $\beta$'s descending edges is an $x$-edge, 
		and \textit{(iii)} for every \textit{hard} W-link of $\beta$, say $W_z(\beta)=\nu$ for a $z \in \Sigma$, $\nu$ does not have a descending $x$-edge.
		\label{lem:SREedges}
	\end{lemma}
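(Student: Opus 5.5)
The proof splits into two directions, each reduced to the characterization of SREs stated after the definition of $\sext(w)$: $(u,x)$ is a SRE if and only if $(u,x)$ is a right extension and $(zu,x)$ is not a right extension for any $z\in\Sigma$. First note that (i) and (ii) together say exactly that $u$ is right-maximal and $ux\in\fact_w$. A locus of the suffix tree is a node precisely when its label is right-maximal (or a suffix, hence a leaf; but a leaf has no descending edge, so (ii) excludes it). The descending edges of a node $\beta$ start with exactly the letters $y$ such that $\strlabel(\beta)y\in\fact_w$. Hence (i)$\wedge$(ii) holds if and only if $(u,x)\in\rext(w)$. It then remains to show, assuming (i) and (ii), that condition (iii) is equivalent to ``$(zu,x)\notin\rext(w)$ for every $z$''.

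For the direction where $(u,x)$ is a SRE, suppose (iii) fails. Then some hard link $W_z(\beta)=\nu$ exists with $\nu$ having an $x$-edge. Hardness means $\nu=\locus(zu)$ is a node, so $zux\in\fact_w$. Since $\nu$ has an $x$-edge, it is not a leaf, so it is an internal node with at least two descending edges. Thus $zu$ is right-maximal and $(zu,x)\in\rext(w)$, contradicting supermaximality.

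For the converse, assume (i)–(iii) and suppose for contradiction that $(zu,x)\in\rext(w)$ for some $z$. Then $zu$ is right-maximal, so $\locus(zu)$ is an internal node $\nu$. Also $zu\in\fact_w$, so $\beta$ has a $z$-link. Because the locus of $z\cdot\strlabel(\beta)$ is the node $\nu$, the link is hard by definition, with $W_z(\beta)=\nu$. Finally $zux\in\fact_w$ means $\nu$ has a descending $x$-edge, contradicting (iii).

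The only delicate points are the bookkeeping between ``hard link'' and ``locus is a node'', and excluding leaves. I expect the main (mild) obstacle to be the leaf case: for a leaf $\nu$, $zu$ would be a suffix of $w$ that is not right-maximal, and a leaf has no descending $x$-edge. This is exactly why only hard links need to be inspected in (iii): a soft $z$-link means $zu$ is not a node, hence not right-maximal, so it can never witness a violation of supermaximality.
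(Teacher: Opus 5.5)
Your proof is correct and takes the same route as the paper. Both reduce to the characterization that $(u,x)$ is a SRE iff it is a right extension and no $(zu,x)$ is one. A witness $(zu,x)$ then corresponds exactly to a hard $z$-link from $\beta$ to an internal node with a descending $x$-edge, and a soft link can never be such a witness because it lands on an implicit, non-right-maximal locus. Your write-up is more explicit than the paper's terse argument: it proves both directions separately and spells out why leaves and soft links can be ignored.
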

	\begin{proof}
		By the definition of a SRE, 
		$u$ must be right-maximal and one of its occurrences is followed by $x$, which implies (i) and (ii).  
		Furthermore, the definition requires that for every $z\in\Sigma$, 
		either $W_z(\locus(u))$ does not have a descending $x$-edge, or $W_z(\locus(u))$ is an implicit node which means that $W_z(\locus(u))$ is soft --- otherwise $u$ is a suffix of a longer right-maximal repeat.
		That is, the excluded case is when $W_z(\locus(u))$ is hard and has a descending $x$-edge; (iii) follows. 
	\end{proof}
	
	We now analyse how the set $\sext(w)$ of SREs can be modified when a letter $x\in\Sigma$ is prepended to $w$. 
	We analyse this in terms of a round of Weiner's algorithm described in Section~\ref{sec:weiner} and illustrated in Figure~\ref{fig:weiner_suffix_update}. Recall that $\strlabel(\alpha)$ is the longest prefix of $w$ that has an occurrence in $w$ preceded by $x$, and then $\strlabel(\gamma)=x\,\strlabel(\alpha)$ is the longest prefix of $xw$ that has another occurrence in $xw$. Note that $\strlabel(\alpha)$ is right-maximal in $w$ and $\strlabel(\gamma)$ is right-maximal in $xw$. 
	Let $y \in \Sigma$ such that $w$ starts with $\strlabel(\alpha) \cdot y$ 
	(and therefore $xw$ starts with $\strlabel(\gamma) \cdot y $). 

	We now use Lemma~\ref{lem:SREedges} to determine the changes in $\sext(w)$ caused by a round of Weiner's algorithm. Based on Lemma~\ref{lem:SREedges}, we identify a SRE $(u,x)$ with the pair $(\locus(u),x)$.
	By definition, $\locus(u)$ is a node that has a descending $x$-edge. 
	The following lemma specifies the modifications depending on whether $W_x(\alpha)$ is hard or soft.
	\begin{lemma}
			\begin{enumerate} [(a)]	
			\item If $W_x(\alpha)$ is hard, i.e. $\gamma$ is an existing node, then $(\gamma,y)$ is a new SRE of $xw$ that has to be added when transforming $\sext(w)$ to $\sext(xw)$. %
			Furthermore, if $(\alpha,y)$ was a SRE in $\sext(w)$, by Lemma~\ref{lem:SREedges} it is no longer a SRE in $\sext(xw)$ and should be omitted. 
			\item If $W_x(\alpha)$ is soft, then $(\gamma,y)$ and $(\gamma,z)$ are two new SREs of $xw$ associated with the new node $\gamma$ and its two descending edges (a $y$-edge and a $z$-edge). 
				Furthermore, if any of $(\alpha,y)$ or $(\alpha,z)$ is a SRE in $\sext(w)$, it is not a member of $\sext(xw)$. 
		\end{enumerate}
		\label{lem:righttoleft}
		\end{lemma}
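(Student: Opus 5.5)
We would prove both parts directly from the characterization of Lemma~\ref{lem:SREedges}, applied once in $\STree(w)$ and once in $\STree(xw)$. We first pin down which nodes and edges change when passing from $w$ to $xw$. The only new node labels are $\strlabel(\lambda')=xw$ and, in case (b), $\strlabel(\gamma)=x\,\strlabel(\alpha)$. The only new descending edges are the $y$-edge from $\gamma$ to $\lambda'$ and, in case (b), the $z$-edge from $\gamma$ to its former child $\beta$, where $z$ is the first letter below $\gamma$ on the split edge. The only new or hardened W-links are the $x$-link of $\lambda$ and, possibly, the $x$-link of $\alpha$, which now points hard to $\gamma$. All other hard W-links keep their targets; soft links play no role in Lemma~\ref{lem:SREedges}(iii).

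\textbf{New SREs at $\gamma$.} Condition (i) holds because $\gamma$ is a node of $\STree(xw)$. Condition (ii) holds because $\gamma$ has a descending $y$-edge in both cases, and also a $z$-edge in case (b). For condition (iii), we argue via the definition rather than by inspecting links. A hard $W_{z'}(\gamma)$ with a descending $y$-edge (resp.\ $z$-edge) would mean that $z'x\,\strlabel(\alpha)$ is right-maximal in $xw$ and is followed by $y$ (resp.\ $z$). In particular $z'x\,\strlabel(\alpha)$ would have at least two occurrences in $xw$, and hence at least two occurrences in $w$ since $xw$ starts with $x$, not with $z'x$. Then $x\,\strlabel(\alpha)$ would occur in $w$ preceded by $z'$, so $\gamma$ would already have been a node with that hard link in $\STree(w)$.

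\textbf{The main obstacle.} The difficulty lies in the $y$-edge. We must show there is no occurrence of $z'x\,\strlabel(\alpha)y$ together with right-maximality of $z'x\,\strlabel(\alpha)$. Any occurrence of $x\,\strlabel(\alpha)y$ other than the prefix would make $x\,\strlabel(\alpha)y$ a longer prefix of $xw$ occurring elsewhere. That contradicts $\strlabel(\gamma)$ being the longest such prefix. Hence the only occurrence of $x\,\strlabel(\alpha)y$ is the prefix, which has no preceding letter, and (iii) holds for $(\gamma,y)$. For $(\gamma,z)$ in case (b), $\gamma$ is newly right-maximal. Any left extension $z'x\,\strlabel(\alpha)$ is either not right-maximal, since the new occurrence at position $1$ is not preceded by anything, or its hard link target has to be examined. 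We would argue that right-maximality of $z'x\,\strlabel(\alpha)$ would force $x\,\strlabel(\alpha)$ to have been right-maximal in $w$ already. That contradicts $\alpha$'s $x$-link being soft.

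\textbf{Removed SREs at $\alpha$.} In $\STree(xw)$, $\alpha$ now has a hard $x$-link to $\gamma$, and $\gamma$ has a descending $y$-edge, plus a $z$-edge in case (b). Hence Lemma~\ref{lem:SREedges}(iii) fails for $(\alpha,y)$, and for $(\alpha,z)$ in case (b), so these pairs are not SREs of $xw$. For completeness we would also note that $\lambda'$ is a leaf with no descending edges, so it contributes nothing. We would further note that the new hard link of $\lambda$ concerns a leaf, which has no edges. So no other SRE is created or destroyed, although the statement itself only asserts the listed changes.
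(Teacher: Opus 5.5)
Your proof is correct and follows essentially the paper's argument. The $y$-edge is handled by noting that $x\,\strlabel(\alpha)y$ occurs in $xw$ only as a prefix. The $z$-edge in case (b) is handled by observing that a right-maximal left extension of $\strlabel(\gamma)$ in $xw$ would make $\strlabel(\gamma)$ right-maximal in $w$; the paper phrases this as ``all W-links of $\gamma$ become soft''. The removals at $\alpha$ follow from the now-hard link $W_x(\alpha)=\gamma$.

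Drop your first paragraph on condition (iii), though. It is not needed, and it contains two errors:
\begin{itemize}
\item The claim that $xw$ cannot start with $z'x$ is false in general. Take $w=\mathtt{aaa\$}$ and $x=z'=\mathtt{a}$: then $\strlabel(\alpha)=\mathtt{aa}$, and $xw$ begins with $z'x\,\strlabel(\alpha)$.
\item The conclusion does not follow. Mere occurrences of $z'x\,\strlabel(\alpha)$ in $w$ do not make $\gamma$ a node, and in case (a) there is no contradiction to reach at all.
\end{itemize}
Your ``main obstacle'' paragraph is what actually carries the proof.
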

		\begin{proof}
				\begin{enumerate} [(a)]	
					\item First observe that any hard W-link of $\gamma$ (if there is any) points to a node without descending $y$-edge. Indeed, by definition of $\alpha$, $\strlabel(\gamma)=x\cdot\strlabel(\alpha)$ does not  have right extension $y$ before the round, and the only occurrence of $\strlabel(\gamma)\cdot y$ after the round is a prefix occurrence. Therefore, for any $z\in\Sigma$, $z\cdot\strlabel(\gamma)\cdot y$ does not occur in $xw$ which means that $W_z(\gamma)$ cannot have a descending $y$-edge. 
					
					From the above, by Lemma~\ref{lem:SREedges}, a new descending $y$-edge created for $\gamma$ defines a new SRE $(\gamma,y)$. On the other hand, if $(\alpha,y)$ was a SRE, it is no longer one because $\gamma=W_x(\alpha)$ has now a descending $y$-edge. 
					\item Recall that the $y$-edge of $\gamma$ connects $\gamma$ to the newly created leaf ($\lambda'$ in Figure~\ref{fig:weiner_suffix_update}) and then the $z$-edge connects $\gamma$ to the previously existing descendant ($\beta$ in Figure~\ref{fig:weiner_suffix_update}). In terms of the text~$w$, 
						$y$ is the letter following the prefix occurrence of $\strlabel(\gamma)$ in $xw$ and $z$ is the unique letter following its other occurrences. 
						Also recall that $\gamma$ inherits the W-links of its already existing child ($\beta$ in the figure), but all these W-links become soft. 
						Indeed, if $W_t(\gamma)$  were hard for some $t\in\Sigma$, then $t \cdot \strlabel(\gamma)$ would be the label of either an internal node or a leaf before the round.
						In the former case, $t \cdot \strlabel(\gamma)$ would have two distinct right extensions, and hence $\strlabel(\gamma)$ would be right-maximal. 
						In the latter case,  $\strlabel(\gamma)$ would be a suffix of the current text.
						In both cases, $\gamma$ would already be explicit before the split, a contradiction.
					Hence, $(\gamma,y)$ and $(\gamma,z)$ are new SREs. Furthermore, similar to the previous case, if any of $(\alpha,y)$ and $(\alpha,z)$ was a SRE in $\sext(w)$, it is no longer a SRE in $xw$, i.e. it is not a member of $\sext(xw)$. 
				\end{enumerate}
		\end{proof}
	Thus, a round of Weiner's algorithm either introduces one new SRE and possibly deletes one, or introduces two new SREs and possibly deletes one or two. 
	
	We now turn to the implementation and show how to maintain the set of SREs and, at the same time, an SSS corresponding to this set. 
	As in \cite[before Theorem~3]{DBLP:journals/corr/abs-2506-05638}, it is convenient to think of positions as decrementing negative integers referring to the distance from the end of the string. At each node $\mu$ of the suffix tree, we store the \textit{end} position of the rightmost occurrence of $\strlabel(\mu)$. 
	For leaves, this position will be $0$. 
	For internal nodes, we maintain these positions in the standard way: 
	when a new internal node is created by splitting an edge (node $\gamma$ in Fig.~\ref{fig:weiner_suffix_update}), the rightmost end position of $\strlabel(\gamma)$ is computed relative to the descendant node ($\beta$), by subtracting the label length of the corresponding edge ($|\strlabel(\beta)|-|\strlabel(\gamma)|$). 
	
	Furthermore, each SRE $(u,x)$ is uniquely associated with an edge of the suffix tree (see Lemma~\ref{lem:SREedges}) and therefore can be associated with a unique rightmost end position of $ux$. The uniqueness is supported by the fact that no two SREs can end at the same position~\cite[Lemma~1]{DBLP:conf/spire/CenzatoOP24}.
	We compute this rightmost end position of a SRE $(u,x)$ from the position stored at the descendant node of the corresponding edge. 
	We then associate the current set of SREs to an SSS consisting of the rightmost end positions of each SRE. 
	To maintain this set, we use a left-growable bit vector storing $1$ at each bit position corresponding to the rightmost end position of one of the current SREs. 
	The growable vector can be implemented e.g. using the \emph{resizable array} of \cite{brodnik99resizable}. 
	When a SRE is inserted into or deleted from the current set, its rightmost end position is computed and the corresponding bit of the array is flipped. The size of the SSS is the number of 1's in the array. 
	
	We conclude with the final result of this section.
	\begin{theorem}
		In the right-to-left framework, the set of SREs and an SSS consisting of the rightmost positions of each SRE can be maintained online in worst-case $O(\timeSU)$ time per letter. This bound can be reduced to $O(\log\log n)$ for alphabet size $\sigma=O(\log^{1/4}n)$. 
	\end{theorem}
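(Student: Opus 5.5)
The proof will run Weiner's algorithm as implemented in Theorem~\ref{final-weiner} and add $O(1)$ worst-case extra work per round for the SRE and SSS bookkeeping. By Lemma~\ref{lem:righttoleft}, a round adds at most two SREs and removes at most two. Each affected SRE is attached to $\gamma$ or to $\alpha$, and its letter is $y$ or $z$. All of these are already available during the round. Node $\alpha$ is found by the fringe colored ancestor query of Section~\ref{sec:fringe}, and $\gamma$ is obtained from the recovered $x$-link. Letter $y$ is the letter following the prefix occurrence, that is, the first letter of the new leaf edge. Letter $z$ is the first letter of the edge from $\gamma$ to $\beta$, which is known when the edge is split.

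The first step is to test whether $(\alpha,y)$ (and, in case (b), $(\alpha,z)$) belongs to $\sext(w)$. To make this test cheap, I would store a flag on every edge of the suffix tree marking whether the edge currently represents a SRE. The test then needs access to the $y$-child of $\alpha$. This is parent-to-child access, whose cost is already subsumed by the colored predecessor bound, as remarked before Theorem~\ref{final-weiner}. One detail: in case (b), if the split edge hung below $\alpha$ itself, the edge flags must be carried over correctly during the split. The old edge from the parent of $\gamma$ to $\beta$ has its flag reassigned to the edge $\gamma\to\beta$ only if that is appropriate. I would argue that the new SREs at $\gamma$ are exactly those listed in Lemma~\ref{lem:righttoleft}, and that no other edge changes SRE status. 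The latter is implicit in the lemma, which enumerates all changes. Hence flags on all other edges remain valid.

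The second step maintains the positions. Each node stores the end position of its rightmost occurrence, measured as a negative offset from the text end. This value is set in $O(1)$ time for the new leaf (value $0$) and for $\gamma$, computed from $\beta$ via edge lengths. The existing stored values stay valid after prepending, because offsets from the end do not change. For each SRE $(u,x)$ that is inserted or deleted, its rightmost end position is the stored position of the lower endpoint of its edge, minus the length of that edge's label, plus one. This is $O(1)$ arithmetic. The corresponding bit in the left-growable resizable array of \cite{brodnik99resizable} is then flipped; growing the array by one cell per letter is worst-case $O(1)$. A counter of ones is updated alongside, so $\chi$ is available at all times. Distinctness of the positions of distinct SREs, by \cite[Lemma~1]{DBLP:conf/spire/CenzatoOP24}, ensures the bit vector represents exactly an SSS.

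The main obstacle is checking that the rightmost position of a surviving SRE never changes while its bit stays fixed. A new occurrence created by prepending could be further right, but the only new occurrences are prefixes of $xw$, which lie to the left. Offsets from the end are therefore stable. Summing up, each round costs the Weiner round plus $O(1)$ extra work, which gives $O(\log^2\log n)$ in general and $O(\log\log n)$ for $\sigma=O(\log^{1/4}n)$ by Theorem~\ref{final-weiner}.
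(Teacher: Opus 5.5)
Your proposal is correct and follows essentially the paper's argument. Lemma~\ref{lem:righttoleft} bounds the SRE changes per Weiner round by a constant. Rightmost end positions stored at nodes are unaffected by prepending, and they give each SRE's position from the lower endpoint of its edge. Flipping bits in the left-growable vector costs $O(1)$, so the total is dominated by Theorem~\ref{final-weiner}. Your per-edge flags usefully make explicit the membership test for $(\alpha,y)$ and $(\alpha,z)$, which the paper leaves implicit. The bit vector alone cannot serve for this test, since a non-SRE right extension may share its rightmost end position with an SRE.

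One correction to that bookkeeping is needed, and it applies at \emph{every} split, not only below $\alpha$. Let $\pi$ be the parent of $\gamma$ and $c$ the first letter of the split edge from $\pi$ to $\beta$. The old flag of that edge records the status of $(\strlabel(\pi),c)$, so it must stay with the upper edge $\pi\to\gamma$. It is cleared only if this pair is one of the deleted $(\alpha,y)$ or $(\alpha,z)$. The lower edge $\gamma\to\beta$ always gets a fresh flag for the new SRE $(\gamma,z)$; it does not inherit the old one.
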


	\subsection{Left-to-right framework} 
	\label{sec:left-to-right}
	If we receive the letters of the text from left to right, we can use Ukkonen's algorithm to maintain the suffix tree online. 
	This algorithm, in turn, can be used to maintain online the set of SREs and an SSS~\cite[Section~7.1]{DBLP:journals/corr/abs-2506-05638}. Ukkonen's algorithm runs in linear time for constant-sized alphabets, however it provides no worst-case guarantee on the time spent on an individual letter. Here we apply the modified Weiner's algorithm from Section~\ref{sec:algotools} to obtain an efficient near-real-time algorithm for this problem. 
	
	\subsubsection{Modifications to the set of SREs}
	\label{sec:SREmodifications}
	
	We will be maintaining the suffix tree $\STree(\pali{w})$ of the reversal~$\pali{w}$ for an input text $w$ received online from left to right. 
	The reversal~$\pali{w}$ of $w[1..n]$ is the string $w[n] \cdot w[n-1] \cdots w[1]$.
	Consequently, $\STree(\pali{w})$ indexes the reverse~$\pali{u}$ of each substring $u\in\fact_w$.

	Analogous to Section~\ref{sec:right-to-left}, we first characterize how SREs of $w$ are represented in $\STree(\pali{w})$. 
	\begin{lemma}
		\label{lem:SRElefttoright}
		Let $ux\in \fact_w$ for $u\in \Sigma^*$ and $x\in \Sigma$. Then $(u,x)$ is a SRE in $w$ if and only if 
		(i) $\alpha = \locus(\pali{u})$ is a node in $\STree(\pali{w})$ with an $x$-link and at least another defined W-link, and 
		(ii) for each child node $\beta$ of $\alpha$, either $W_x(\beta)$ is not defined, or $W_x(\beta)$ is the only defined W-link of $\beta$. 
	\end{lemma}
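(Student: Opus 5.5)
We translate the definition of a SRE into statements about $\STree(\pali{w})$. The basic dictionary is this: for $v\in\Sigma^*$ and $a\in\Sigma$, $va\in\fact_w$ if and only if $a\pali{v}\in\fact_{\pali{w}}$. When $\locus(\pali{v})$ is a node, this says exactly that $W_a(\locus(\pali{v}))$ is defined. Likewise, $zv\in\fact_w$ if and only if $\pali{v}z$ occurs in $\pali{w}$, i.e., the locus of $\pali{v}$ continues by $z$.

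We first handle right-maximality together with the occurrence of $ux$. The string $u$ is right-maximal with $ux\in\fact_w$ iff $x\pali{u}$ and $y\pali{u}$ both occur in $\pali{w}$ for some $y\neq x$. Assume $\locus(\pali{u})$ is a node $\alpha$. Then this is precisely the condition that $\alpha$ has an $x$-link and at least one other W-link. Explicitness of $\alpha$ has to be argued separately in each direction.

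For the forward direction, suppose $(u,x)$ is a SRE. By the characterization in Section~2, for every letter $z$, either $zux\notin\fact_w$ or $zu$ is not right-maximal. Every right-extension $(zu,x)$ is a suffix-extension of $(u,x)$ and is excluded. Maximality then gives that $\pali{u}$ is right-maximal in $\pali{w}$ or a prefix of $\pali{w}$, i.e., $u$ is a suffix of $w$. The point is that if $\pali{u}$ always continued by a single letter $z$ and were not a prefix of $\pali{w}$, then every occurrence of $u$ in $w$ would be preceded by $z$. In that case $zux$ and $zuy$ would both occur, contradicting supermaximality. Hence $\alpha$ is a node (internal, or a leaf since $\pali{w}$ is a prefix-occurrence), and (i) follows.

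For (ii), take any child $\beta$ of $\alpha$, reached by a $z$-edge. Every string labelling a locus on that edge or at $\beta$ has the form $\pali{u}z\cdots$. Because the locus of $\pali{u}z$ lies on the edge to $\beta$, the set of W-links of $\beta$ equals the set of letters $a$ with $a\pali{u}z$ occurring in $\pali{w}$, i.e., with $zua\in\fact_w$. Here we use that the letters preceding occurrences of $\pali{u}z$ are the same as those preceding $\strlabel(\beta)$, since both have the same occurrences. So ``$W_x(\beta)$ defined together with another W-link'' says exactly that $zux\in\fact_w$ and $zu$ is right-maximal, i.e., that $(zu,x)$ is a right extension. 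The SRE condition forbids this for every $z$, which gives (ii). The converse reverses the same equivalences. Condition (i) gives that $ux\in\fact_w$ and that $u$ is right-maximal. Every $z$ with $zu\in\fact_w$ labels an edge out of $\alpha$, so (ii) excludes all extensions $(zu,x)$. If $\alpha$ is a leaf there are no such $z$ besides the case where $u$ is a prefix-context, and the condition is vacuous.

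The main obstacle is the claim in (ii) that the W-links of the child $\beta$ coincide with the left-contexts of $\pali{u}z$. This requires that no occurrence of $\pali{u}z$ fails to extend to $\strlabel(\beta)$, which is exactly the fact that the edge contains no branching. The other delicate point is the case where $\pali{u}$ is a prefix of $\pali{w}$, i.e., $u$ is a suffix of $w$. We expect to handle it using the sentinel-free setting: the leaf or implicit locus is then still a node because a prefix of $\pali{w}$ ending at an occurrence is a leaf only in the suffix-tree-with-all-prefixes sense. If this is problematic, we would argue that explicitness is implied by $\alpha$ having two distinct W-links plus the SRE maximality, as sketched above.
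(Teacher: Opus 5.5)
Your translation is the same as the paper's. The W-links of the child $\beta$ of $\alpha$ below the $z$-edge are exactly the letters $a$ with $zua\in\fact_w$, so condition (ii) says no $(zu,x)$ is a right extension. Your proof that $\alpha$ is explicit is also the paper's contradiction argument: an implicit locus would share its W-links with its unique lower node. The converse direction is fine as well.

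The gap is in the boundary case. The dangerous occurrence is one of $u$ in $w$ with no preceding letter, i.e.\ at the \emph{start} of $w$. That corresponds to $\pali{u}$ being a \emph{suffix} of $\pali{w}$, not a prefix. A prefix occurrence of $\pali{u}$ is followed by a letter like any other occurrence and does not make $\locus(\pali{u})$ a node. So ``Hence $\alpha$ is a node (\dots\ or a leaf since $\pali{w}$ is a prefix-occurrence)'' does not follow. Likewise, your claim that a non-right-maximal $\pali{u}$ which is not a prefix of $\pali{w}$ forces every occurrence of $u$ to be preceded by $z$ fails when $u$ is a prefix of $w$.

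Your proposed repair via a ``sentinel-free setting'' goes the wrong way, because without a terminator the lemma is false. Take $w=\mathtt{abcad}$. Then $(\mathtt{a},\mathtt{b})$ is a SRE: $\mathtt{a}$ is followed by $\mathtt{b}$ and $\mathtt{d}$, and $\mathtt{ab}$ is never preceded by a letter. But in the terminator-free suffix tree of $\pali{w}=\mathtt{dacba}$, $\mathtt{a}$ is an implicit locus on the edge to the leaf $\mathtt{acba}$. That leaf has only a $\mathtt{d}$-link, although $\mathtt{a}$ is preceded by both $\mathtt{d}$ and $\mathtt{b}$. So both (i) and your ``same occurrences along an edge'' step break. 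Absence of branching is not enough; you also need that no suffix of $\pali{w}$ ends inside an edge.

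The missing ingredient is the standing assumption (Lemma~\ref{lem:SREreversed}, and Weiner's algorithm started from $\$$) that $\pali{w}$ ends with a unique terminator. With it, the argument goes through:
\begin{itemize}
\item No suffix of $\pali{w}$ ends at an implicit locus, so all loci on an edge have the same occurrences, hence the same W-links, as the edge's lower node.
\item A leaf label occurs once and has at most one W-link, so $\alpha$ is internal whenever $u$ is right-maximal or (i) holds.
\item If $\pali{u}$ were not right-maximal, all its (at least two) occurrences would be followed by one letter $z$. Then $zux, zuy\in\fact_w$ and $(zu,x)$ is a right extension, which is the contradiction you were after.
\end{itemize}
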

	\begin{proof}
		Since $u$ is right-maximal in $w$, it has at least two right extensions in $w$, one of which is $x$, and therefore $\pali{u}$ is preceded by at least two distinct letters in $\pali{w}$, one of which is $x$. Thus, $\alpha=\locus(\pali{u})$ has at least two defined W-links in $\STree(\pali{w})$, one of which is an $x$-link. 
		By definition of SREs, for each $z\in\Sigma$, either $zux\notin\fact_w$ or $zu$ is followed only by $x$ in $w$. 
		This means that, for each $z \in \Sigma$ such that $\alpha$ has a $z$-edge, the child of $\alpha$ connected by this $z$-edge either has no $x$-link, or has an $x$-link as its sole W-link. 
		This also implies that $\alpha$ must be a node, as assuming that it has only one child contradicts Lemma~\ref{lem:SREreversed}(\ref{case:SREreversed2}) which implies that the W-links of an implicit node must be the same as those of its closest descendant. 
	\end{proof}
	
	Note that according to Lemma~\ref{lem:SRElefttoright}, here we are only concerned with what W-links are defined at a node and not with the destination of those W-links or with the distinction between hard and soft W-links. Therefore, for the purpose of this section, it is sufficient for us to assume that each node is associated with a subset of letters $z\in \Sigma$ with a defined $z$-link. 
	We now analyse how the set of SREs is modified when a new letter $x$ is appended to the current string $w$. Similar to Section~\ref{sec:right-to-left}, we infer these modifications from a round of Weiner's algorithm maintaining the suffix tree for the reversed string $\pali{w}$. 
	
	According to Lemma~\ref{lem:SRElefttoright}, we have to keep track of pairs $(\alpha,x)$, where $\alpha$ is a node and $x\in \Sigma$, such that $\alpha$ has an $x$-link as well as at least one W-link by another letter, and any child node of $\alpha$ either has only an $x$-link or does not have an $x$-link at all. 
	Like in the previous section, we will say that such a pair $(\alpha,x)$ is a SRE. 
	
	Appending $x$ to $w$ amounts to prepending $x$ to $\pali{w}$, and the corresponding updates to the suffix tree made by Weiner's algorithm have been described in Section~\ref{sec:weiner}. 
	In particular, recall that each new W-link belongs to one of the two types: 
	\begin{enumerate} [(i)]
		\item W-links defined for the newly created node $\gamma$ which are copied from the descendant node $\beta$, \label{item:gamma}
		\item $x$-links 
			defined for nodes that are on the path to the leaf $\lambda$ with $\strlabel(\lambda)=\pali{w}$. In Figure~\ref{fig:weiner_suffix_update}, these are nodes on the path between $\alpha$ (excluded) and $\lambda$ (included). \label{item:lambda}
	\end{enumerate}
	W-links of Type~(\ref{item:gamma}) are not relevant to us because the W-links of $\gamma$ are the same as the W-links of its descendant $\beta$, and therefore the conditions of Lemma~\ref{lem:SRElefttoright} cannot be satisfied. 
	Therefore, only W-links of Type~(\ref{item:lambda}) are of interest. This implies that if $(\mu,z)$ is a SRE to be added to or deleted from the current set of SREs, then $\mu$ is one of the nodes on the path between $\alpha$ (included) and $\lambda$ (excluded). For the other nodes of the tree, their set of W-links as well as the set of W-links of their children is not modified. 

	Based on Lemma~\ref{lem:SRElefttoright}, we now specify the possible modifications to the set of SREs that can occur at a round of Weiner's algorithm. As before, we assume that $x$ is the letter to be prepended to $\pali{w}$, $\lambda$ is the leaf labeled $\pali{w}$, and $\alpha$ is the node labeled by the longest prefix of $\pali{w}$ that has an occurrence in $\pali{w}$ preceded by $x$. 
Let $\nu$ be the parent of $\lambda$. 
The following %
lemma specifies all the updates that can occur. 
	\begin{lemma}
	\begin{enumerate} [(a)]
		\item If $\nu=\alpha$, that is, the parent of $\lambda$ has an $x$-link, then no modification has to be made to the set of SREs. \label{case:1}
				
		\hspace{-2.5em} The following cases assume that $\nu\neq \alpha$, i.e.\ $\nu$ is a proper descendant of $\alpha$. 
		
		\item $(\nu,x)$ is a new SRE. Furthermore, if $\nu$ had a single W-link by some letter $y\neq x$, then $(\nu,y)$ is a new SRE as well. \label{case:2}
		
		\item If $(\alpha,x)$ is a SRE, then it has to be omitted in $\sext(wx)$. \label{case:3}
		
		\item Let $\mu$ be the lowest proper ancestor of $\nu$ with two or more W-links such that its child on the path from $\mu$ to $\lambda$ has only one W-link by some letter $y\neq x$. If $(\mu,y)$ is a SRE, it has to be omitted in $\sext(wx)$. \label{case:4}
	\end{enumerate}
	\label{lem:lefttoright}
	\end{lemma}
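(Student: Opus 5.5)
The plan is to compare, node by node, the W-link sets before and after the round, and then check the conditions of Lemma~\ref{lem:SRElefttoright} only where something changed. As argued just before the statement, Type~(\ref{item:gamma}) links never create SREs. The only changes in W-link sets are therefore the new $x$-links on the path from $\alpha$ (excluded) to $\lambda$ (included). So a node $\mu$ can change its SRE status only if $\mu$ or one of its children gains an $x$-link, and all such $\mu$ lie on the path from $\alpha$ (included) to $\nu$ (included).

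For each such node I will write $L(\mu)$ for its W-link set before the round; after the round it becomes $L(\mu)\cup\{x\}$ if $\mu$ is strictly below $\alpha$. A fact used throughout: $x\notin L(\mu)$ for every $\mu$ strictly below $\alpha$, by the choice of $\alpha$. By Lemma~\ref{lem:SREreversed}(\ref{case:SREreversed2}), moving up this path the sets $L(\cdot)$ are non-decreasing.

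\textbf{Case (\ref{case:1}).} If $\nu=\alpha$, the only node gaining a link is the leaf $\lambda$. Its only candidate for SRE status would be a pair at its parent $\alpha$. Before the round, $\lambda$ had a single W-link by some $y\ne x$, except when $w$ is empty; in that trivial case, or when the W-link set of $\lambda$ is still a single link, nothing changes. Now $\lambda$ has links $\{y,x\}$. I would check condition (ii) of Lemma~\ref{lem:SRElefttoright} at $\alpha$:
\begin{itemize}
\item For $(\alpha,x)$: the child $\lambda$ now has an $x$-link that is not its only link. So $(\alpha,x)$ cannot be an SRE. It was also not one before: $\lambda$ had only $y$, while $\alpha$ had $x$, so I must argue it stays as it was.
\end{itemize}
This is the delicate step. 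I expect the correct reading is that $\lambda$, being the leaf for the whole string $\pali{w}$, had no W-link at all before the round, by Lemma~\ref{lem:SREreversed}(\ref{case:SREreversed1}). After the round it has only the $x$-link. Condition (ii) then holds before and after for $(\alpha,x)$ via $\lambda$, and for other letters $\lambda$ has no link of that letter. So nothing changes, which gives (\ref{case:1}). The same observation, that $\lambda$ goes from no links to exactly $\{x\}$, simplifies all remaining cases, because the child $\lambda$ never violates (ii).

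\textbf{Case (\ref{case:2}).} Now $\nu\ne\alpha$. The node $\nu$ is internal and had no $x$-link, so it gains $x$. All of its children other than $\lambda$ are off the path, so none of them has an $x$-link (they are below $\alpha$). The child $\lambda$ has only $x$. Hence condition (ii) holds for $(\nu,x)$. Condition (i) needs a second link at $\nu$; by Lemma~\ref{lem:SREreversed}(\ref{case:SREreversed1}) $\nu$ has one, so $(\nu,x)$ is a new SRE. If $L(\nu)=\{y\}$, then $(\nu,y)$ now satisfies (i). For (ii), every child of $\nu$ having a $y$-link must have $y$ as its only link, since its set is a subset of $\{y\}$, and $\lambda$ has no $y$-link. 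So $(\nu,y)$ is also new. If $|L(\nu)|\ge2$, the $\nu$-pairs for letters other than $x$ are unchanged, because only $\lambda$ changed and it never carries those letters.

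\textbf{Cases (\ref{case:3}) and (\ref{case:4}).} Next I handle the proper ancestors $\mu$ of $\nu$ on the path, with child $\mu'$ on the path.
\begin{itemize}
\item At $\alpha$: its child $\mu'$ gains $x$ and also keeps its old nonempty link set. Hence $(\alpha,x)$ fails (ii) and must be removed, which gives (\ref{case:3}). The pairs $(\alpha,t)$ with $t\ne x$ could be affected only if $\mu'$ had the sole link $t$ and now has $\{t,x\}$. This is exactly the situation of (\ref{case:4}) with $\mu=\alpha$.
\item At an intermediate $\mu$ (strictly between $\alpha$ and $\nu$): $\mu$ gains $x$. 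The pair $(\mu,x)$ fails (ii) because of $\mu'$, so it was not an SRE before (no $x$) and is not one after. For $t\ne x$, condition (i) is unchanged if $|L(\mu)|\ge2$. Condition (ii) changes only if $L(\mu')=\{t\}$, in which case $(\mu,t)$ is lost. If $|L(\mu)|=1$, monotonicity gives $L(\mu')=L(\mu)=\{t\}$, so $(\mu,t)$ becomes valid for (i) but fails (ii); nothing new arises.
\end{itemize}
Finally, monotonicity shows the singleton situation $L(\mu')=\{t\}$ with $|L(\mu)|\ge2$ occurs at most once along the path. Above that point $L(\mu')$ already has two elements, and below it $L(\mu)$ is a singleton. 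That single occurrence is the node $\mu$ described in (\ref{case:4}).

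The main obstacle I expect is keeping the bookkeeping of before/after conditions straight at $\alpha$ and at the unique $\mu$, including the possibility $\mu=\alpha$.
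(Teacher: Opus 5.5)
Your proposal is correct and follows essentially the paper's argument: you track which nodes gain an $x$-link, namely only those strictly below $\alpha$ on the path to $\lambda$, where $\lambda$ goes from no W-link to exactly $\{x\}$ by Lemma~\ref{lem:SREreversed}(\ref{case:SREreversed1}), and you re-check the conditions of Lemma~\ref{lem:SRElefttoright} at $\nu$, $\alpha$ and $\mu$; your node-by-node monotonicity sweep additionally gives the exhaustiveness that the paper proves as a separate lemma. The only thing to fix is presentational: delete the first bullet of Case~(a), which rests on the false premise that $\lambda$ had a $y$-link before the round, and keep only your corrected reading.
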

	\begin{proof}
		Cases (\ref{case:2}), (\ref{case:3}) and (\ref{case:4}) are illustrated in \cref{fig:lrcases} in \cref{app:figs}. 
		\begin{enumerate} [(a)]
			\item In this case, Weiner's algorithm assigns a single new $x$-link to $\lambda$, which does not modify the set of SREs. 
			\item 
				By Lemma~\ref{lem:SREreversed}(\ref{case:SREreversed1}), $\nu$ had at least one W-link and could not have an $x$-link. Weiner's algorithm assigns an $x$-link to $\lambda$ and $\nu$ and therefore, by Lemma~\ref{lem:SRElefttoright}, $(\nu,x)$ becomes a SRE. 
				If, before the update, $\nu$ had exactly one W-link, say by letter $y$, then $(\nu,y)$ becomes a SRE as well, as all descendants of $\nu$ except for $\lambda$ must have a single W-link by $y$ and $\lambda$ has only the one by $x$. 
			\item Since $\alpha$ had an $x$-link and at least one of its children (the one on the path to $\lambda$) had no $x$-link but had one or more W-links by other letter(s), it may happen that $(\alpha,x)$ was a SRE. After the round of Weiner's algorithm, $(\alpha,x)$ is then no longer a SRE, as it has now a child with an $x$-link and at least one other W-link by another letter. 
			\item %
			Observe that such node $\mu$ always exists whenever $\nu$ has a single W-link, and in this case, $\mu$ is uniquely defined independently of $y$ as the lowest proper ancestor of $\nu$ with at least two W-links such that its child on the path to $\lambda$ has a single W-link. Then, if $(\mu,y)$ was a SRE, it ceases to be a SRE in $\sext(wx)$, as it has now a child with both a $y$-link and an $x$-link. 
		\end{enumerate}
\end{proof}

Before proceeding to algorithmic details, 
we show that the above cases are exhaustive.

	\begin{lemma}
		Lemma~\ref{lem:lefttoright} %
		specifies all possible modifications to the set of SREs.
	\end{lemma}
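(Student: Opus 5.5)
We prove exhaustiveness by locating every pair $(\mu,z)$ whose SRE status can change and checking that each such change is one of the cases of Lemma~\ref{lem:lefttoright}. By Lemma~\ref{lem:SRElefttoright}, the status of $(\mu,z)$ depends only on two things: the set of letters with a defined W-link at $\mu$, and these sets at the children of $\mu$. As argued before Lemma~\ref{lem:lefttoright}, the node $\gamma$ is created with the same W-link set as its child $\beta$, so its creation cannot produce or destroy an SRE. To make this precise, we check that the parent of $\gamma$ now sees a child, namely $\gamma$, with exactly the letter set that $\beta$ had. We also check that $\gamma$ has a single child in $\STree(\pali{w})$ bearing its W-links; the new leaf $\lambda'$ has no W-link. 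Hence $\gamma$ fails condition (i) of Lemma~\ref{lem:SRElefttoright}.

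Therefore only the new $x$-links matter. These sit at the nodes strictly below $\alpha$ on the path to $\lambda$, with $\lambda$ included. So the affected pairs are $(\mu,z)$ with $\mu$ on the path from $\alpha$ (included) down to $\nu$, where $\nu$ is the parent of $\lambda$. The leaf $\lambda$ itself can never be the $\mu$ of an SRE, since it has only one W-link in $\pali{w}$.

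In case $\nu=\alpha$, only $\lambda$ gains a W-link. The only node whose children change is $\alpha$, and it changes by $\lambda$ acquiring an $x$-link. Before the round, $\lambda$ is the leaf for the whole string and has no W-link by Lemma~\ref{lem:SREreversed}(\ref{case:SREreversed1}). After the round it has the $x$-link as its sole W-link. Under condition (ii) of Lemma~\ref{lem:SRElefttoright}, both states are acceptable for every letter $z$, so nothing changes. This matches case~(\ref{case:1}).

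Otherwise, we go through each node $\mu$ on the path from $\alpha$ to $\nu$.
\begin{itemize}
\item For $\mu=\nu$: its W-link set goes from $S\not\ni x$ to $S\cup\{x\}$. Its child $\lambda$ goes from no W-link to $\{x\}$, and its other children are unchanged. We determine which pairs $(\nu,z)$ flip status. The pair $(\nu,x)$ becomes an SRE, because no child of $\nu$ had an $x$-link. If $|S|=1$, then $(\nu,y)$ newly satisfies condition (i). If $|S|\ge 2$, condition (ii) for each $z\in S$ is unaffected, since $\lambda$ has no $z$-link. This gives exactly case~(\ref{case:2}).
\item For $\mu=\alpha$: its own set is unchanged. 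Its child $c$ on the path goes from a set $T\not\ni x$, with $T\neq\emptyset$, to $T\cup\{x\}$. The pair $(\alpha,x)$ may then lose status, which is case~(\ref{case:3}). For a pair $(\alpha,z)$ with $z\in T$: it loses status exactly when $T=\{z\}$, and this is captured by case~(\ref{case:4}) when $\mu=\alpha$ happens to be the node defined there. The pair can never gain status, since adding letters only makes condition (ii) harder to satisfy.
\item For strictly intermediate $\mu$: its own set goes from $S$ to $S\cup\{x\}$ with $x\notin S$, and its path-child goes from $T$ to $T\cup\{x\}$, where $T\subseteq S$ by Lemma~\ref{lem:SREreversed}(\ref{case:SREreversed2}). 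The pair $(\mu,x)$ is not an SRE after the round, because the child has both $x$ and another letter; it was not one before either. For $z\in S$: condition (i) was already satisfiable, and condition (ii) can only break, and only when $T=\{z\}$. In that situation $\mu$ must have at least two W-links.
\end{itemize}

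The main obstacle is showing that in the last item at most one such $\mu$ can actually hold an SRE that gets removed, namely the lowest one as defined in case~(\ref{case:4}). Our plan is as follows. Suppose $\mu'$ is a higher node with $\ge 2$ W-links whose path-child has the single letter $z'$. Let $\mu$ be the lowest such node. By the fringe property, every node between $\mu'$ and $\mu$'s path-child has W-link set $\{z'\}$, or is $\mu$ itself, which contains $z'$. Since $\mu$ has $\ge2$ W-links, $\mu$ lies below the path-child of $\mu'$ and that child's set is $\{z'\}$; the fringe property then forces $\mu$'s set to be $\{z'\}$, a contradiction. Hence such a higher $\mu'$ cannot exist. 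This leaves $\mu$ as the only candidate, giving case~(\ref{case:4}). Finally, nodes off the path are untouched, which completes exhaustiveness.
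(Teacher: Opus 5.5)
Your proof is correct and takes essentially the same route as the paper: you localize all changes to the nodes from $\alpha$ down to $\nu$ and check the two conditions of Lemma~\ref{lem:SRElefttoright}, organizing the case analysis by node rather than by letter, and you usefully spell out the fringe-property argument (which the paper leaves implicit) showing that the node of Lemma~\ref{lem:lefttoright}(\ref{case:4}) is the only place where a pair $(\mu,y)$ with $y\neq x$ can be lost. There are two small slips, both repaired by facts you already state: $\gamma$ need not fail condition~(i), since it may have several W-links, but whenever it satisfies (i), condition~(ii) fails because its child $\beta$ carries the same letter set; and for an intermediate node $\mu$ with a single W-link $z$, condition~(i) does become newly true, so no new SRE $(\mu,z)$ arises because (ii) fails (the path-child now has $\{z,x\}$), not because (i) ``was already satisfiable''.
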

	\begin{proof}
		As noted earlier (W-links of Type~(\ref{item:lambda})), only nodes $\mu$ between $\alpha$ (included) and $\lambda$ (excluded) can form SREs to be added or deleted. 
		\hili{For the letter $x$, a new SRE $(\mu,x)$ can only occur if $\mu=\nu$, and is covered by Lemma~\ref{lem:lefttoright}(\ref{case:2}) ---
			for each other node~$\xi$ on this path, one of $\xi$'s children has a W-link by another letter (due to Lemma~\ref{lem:SREreversed}(\ref{case:SREreversed1})). 
			A SRE $(\mu,x)$ to be deleted can only occur if $\mu=\alpha$, and is covered by Lemma~\ref{lem:lefttoright}(\ref{case:3}). 
			A letter $y\neq x$ can cause a modification in two cases as well: 
			A new SRE $(\mu,y)$ can only appear for $\mu=\nu$ (Lemma~\ref{lem:lefttoright}(\ref{case:2})), 
			and a SRE $(\mu,y)$ can cease to be a SRE only if $\mu$ is the parent of a node having a single $y$-link, as specified in Lemma~\ref{lem:lefttoright}(\ref{case:4}). 
		} 
	\end{proof}
	
	Unsurprisingly, modifications described in Lemma~\ref{lem:lefttoright}(\ref{case:2}),(\ref{case:3}),(\ref{case:4}) correspond to those described in \cite[Section 7.1]{DBLP:journals/corr/abs-2506-05638}. The difference is that in  \cite{DBLP:journals/corr/abs-2506-05638}, modifications are described in terms of the suffix tree for $w$ maintained by Ukkonen's algorithm, whereas we describe them in terms of the suffix tree for $\pali{w}$ maintained by Weiner's algorithm. Below we show how to leverage our version in order to maintain the set of SREs in near-real-time. 
	
	Observe also that if the conditions of Lemma~\ref{lem:lefttoright}(\ref{case:4}) are satisfied, then the parent of $\lambda$ must have a single W-link, and therefore two new SREs are added according to \cref{lem:lefttoright}(\ref{case:2}). This implies that possible modifications amount to adding between zero and two new SREs~\cite[Lemma 2]{DBLP:journals/corr/abs-2506-05638}. 
	
	\subsubsection{Maintaining the set of SREs in near-real-time}\label{sec:maintainingSREs}

	Based on the results of Section~\ref{sec:SREmodifications}, we now describe how we can maintain the set of SREs in near-real-time relying on a near-real-time implementation of Weiner's algorithm from Section~\ref{sec:algotools}. %
	
	To compute the SREs to be added or deleted as specified by Lemma~\ref{lem:lefttoright}(\ref{case:2}),(\ref{case:3}),(\ref{case:4}),
	we have to support the following two operations on the suffix tree.
	\begin{enumerate} [(A)]
		\item given a leaf node $\lambda$ and a letter $x\in\Sigma$, find the lowest ancestor $\alpha$ of $\lambda$ with a defined $x$-link, \label{case:alpha}
		\item given a leaf node $\lambda$, find the lowest  ancestor $\mu$ of $\lambda$ with at least two defined W-links such that its child on the path to $\lambda$ has only one defined W-link. %
		 \label{case:mu}
	\end{enumerate}
	Operation~(\ref{case:alpha}) implements the update described in Lemma~\ref{lem:lefttoright}(\ref{case:3}), 
	whereas Operation~(\ref{case:mu}) implements the update described in Lemma~\ref{lem:lefttoright}(\ref{case:4}).
	Checking Lemma~\ref{lem:lefttoright}(\ref{case:1}) and implementing Lemma~\ref{lem:lefttoright}(\ref{case:2}) is trivial because $\nu$ is directly accessed  from its child $\lambda$. Note also that if $\nu$ has two or more W-links, Lemma~\ref{lem:lefttoright}(\ref{case:4}) does not apply. 

	Operation (\ref{case:alpha}) is a part of Weiner's algorithm and has been discussed in Section~\ref{sec:algotools} (see Fig.~\ref{fig:weiner_suffix_update}). We implement it using the reduction to \textsc{colored predecessor}, as explained in Sections~\ref{sec:fringe} and~\ref{sec:predecessor}. 
	
	Operation (\ref{case:mu}) is implemented similarly. 
	We use Lemma~\ref{lem:KN} with a single color: the colored nodes are nodes with two or more defined W-links. Let $M$ be the set of these nodes. By Lemma~\ref{lem:SREreversed}(\ref{case:SREreversed2}), $M$ is closed under ancestry relation, i.e.\ the coloring has the fringe property. 
	Our goal is to maintain some subset $D\subseteq M$ such that every node from $M$ has a descendant node from $D$. 
	Nodes of $D$ will be considered {explicitly colored} in the sense of Lemma~\ref{lem:KN}. 
	
	We reason by induction on the rounds of Weiner's algorithm. 
	Assume we have correctly computed the set $D$, that is every node with two or more W-links has an explicitly colored descendant. We show how to maintain $D$ through a round of Weiner's algorithm. We start by observing that the W-links assigned to the newly created node $\gamma$ (see Figure~\ref{fig:weiner_suffix_update}) do not cause any modifications to the set $D$ because the set of W-links of $\gamma$ is the same as that of its child $\beta$. Then, only the new W-links on the path between $\alpha$ (excluded) and $\lambda$ (included) may cause  modification. For the nodes on this path, Weiner's algorithm adds one new $x$-link ($x$ letter prepended at this round) 
	and therefore by Lemma~\ref{lem:SREreversed}(\ref{case:SREreversed1}), all these nodes except for $\lambda$ belong to $M$ after this round. Thus, it is sufficient to explicitly color the immediate parent of $\lambda$ unless it was already in $D$ before the round. In other words, if before the round the immediate parent of $\lambda$ had only one defined W-link, it has to be explicitly colored. All other nodes that are not on the path between $\alpha$ and $\lambda$ do not require any updates of $D$. We summarize the arguments in the following statement. 
	\begin{lemma}
		Explicitly coloring the immediate parent of $\lambda$ unless it was explicitly colored before correctly maintains the set $D$ of explicitly colored nodes. 
		\label{lem:explicitmark}
	\end{lemma}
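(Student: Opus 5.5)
We argue by induction over the rounds of Weiner's algorithm. The invariant is that every node of $M$ (nodes with at least two defined W-links) has a descendant, possibly itself, in $D$. We also need the converse requirement of Lemma~\ref{lem:KN}: every node of $D$ lies in $M$, so that $D$-nodes together with their ancestors form exactly $M$. Initially the tree is a single leaf with no W-links, so $M=D=\emptyset$. For the inductive step, assume the invariant holds for $\STree(\pali{w})$ and consider the round that prepends $x$. Let $M'$ be the set of nodes with at least two W-links after the round. The plan is to show that $D'=D\cup\{\nu\}$, with $\nu$ the parent of $\lambda$, satisfies the invariant for $M'$.

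\emph{Soundness} ($D'\subseteq M'$). W-links are never removed, so $M\subseteq M'$ and hence $D\subseteq M'$. If $\nu$ is newly added to $D'$, then $\nu\neq\lambda$ lies on the path from $\lambda$ up to but excluding $\alpha$, or else $\nu=\alpha$. In the first case $\nu$ receives a new $x$-link in this round. It also already had some W-link by Lemma~\ref{lem:SREreversed}(\ref{case:SREreversed1}), and that link is not by $x$, since $\nu$ is below $\alpha$. So $\nu\in M'$. In the second case, $\alpha$ has an $x$-link and, because its child on the path to $\lambda$ has a non-$x$ link, also another link by Lemma~\ref{lem:SREreversed}(\ref{case:SREreversed2}); again $\nu\in M'$. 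A node created by splitting an edge is also harmless: $\gamma$ copies the W-link set of its child $\beta$, so $\gamma\in M'$ iff $\beta\in M'$.

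\emph{Completeness.} Take any $\xi\in M'$ and split into cases.
\begin{itemize}
\item If $\xi=\gamma$, then $\beta\in M'$. The W-links of $\beta$ are unchanged unless $\beta$ lies on the $\lambda$-path, which it does not, since it is on a different branch. So $\beta\in M$ and it has a $D$-descendant, which is also a descendant of $\gamma$.
\item If $\xi\in M$ (W-links already present), the inductive hypothesis gives a $D$-descendant, and splitting edges preserves ancestry.
\item Otherwise $\xi$ gained its second W-link in this round. The only W-links added, besides those copied to $\gamma$, are the $x$-links on the path from $\lambda$ up to but excluding $\alpha$. Hence $\xi$ is an ancestor of $\lambda$ other than $\lambda$ itself, since $\lambda$ gets only the one $x$-link. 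So $\xi$ is an ancestor of $\nu$, and $\nu\in D'$ is its desired descendant.
\end{itemize}
One subtlety is that the parent of $\lambda$ might be changed by the split. The split, however, happens at $\gamma$, which sits on the $x$-branch and not on the $\lambda$-path, so $\nu$ remains the parent of $\lambda$.

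The main obstacle is the bookkeeping in this last case. We must confirm that the two sources of new W-links are exactly the ones listed, namely type~(\ref{item:gamma}) and type~(\ref{item:lambda}), as recalled before Lemma~\ref{lem:lefttoright}. We must also confirm that the new leaf $\lambda'$ has a single W-link, or none, and therefore does not enter $M'$. Finally, $\lambda'$ becomes the next round's $\lambda$, so applying the rule there continues the induction.
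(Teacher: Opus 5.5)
Your plan follows the paper's argument: induction over rounds, $\gamma$ and $\beta$ share their W-link set, and every pre-existing node that newly enters $M$ lies on the path strictly between $\alpha$ and $\lambda$, hence is an ancestor of $\nu$. However, your soundness step for $\nu=\alpha$ is false. In that case the child of $\alpha$ on the path to $\lambda$ is $\lambda$ itself. By Lemma~\ref{lem:SREreversed}(\ref{case:SREreversed1}), $\lambda$ has no W-link before the round and only the new $x$-link after it, so nothing forces a second W-link at $\alpha$.

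Take $\pali{w}=\mathtt{aab\$}$ and $x=\mathtt{a}$. Then $\nu=\alpha=\locus(\mathtt{a})$. In both $\mathtt{aab\$}$ and $\mathtt{aaab\$}$ every non-initial occurrence of $\mathtt{a}$ is preceded by $\mathtt{a}$, so $\nu\notin M'$ and $D\cup\{\nu\}\not\subseteq M'$. This breaks the precondition of Lemma~\ref{lem:KN}, with real consequences. If $\mathtt{b}$ is prepended next, Operation~(\ref{case:mu}) returns $\locus(\mathtt{a})$ instead of the root. The SRE $(\mathit{root},\mathtt{a})$, which Lemma~\ref{lem:lefttoright}(\ref{case:4}) requires to be deleted, then survives.

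The correct rule adds $\nu$ only when $\nu\neq\alpha$, i.e.\ when $\nu$ had no $x$-link (cases (\ref{case:2})--(\ref{case:4}) of Lemma~\ref{lem:lefttoright}). When $\nu=\alpha$, the only node on the path that gains a W-link is $\lambda$, which ends up with exactly one. So $D$ stays unchanged, and your completeness argument still goes through, with its third bullet vacuous. The paper's justification also covers only $\nu$ strictly below $\alpha$, and the statement should be read with that restriction.

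Your claims that $\beta$ is never on the $\lambda$-path and that $\nu$ stays the parent of $\lambda$ are also false, though this part is repairable. In the example above, $\gamma=\locus(\mathtt{aa})$ splits the edge from $\nu$ to $\lambda$, so $\beta=\lambda$ and $\gamma$ becomes the parent of $\lambda$. This can happen only when $\nu=\alpha$, where nothing is colored anyway.

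With $\nu\neq\alpha$, take $\pali{w}=\mathtt{aabcaab\$}$ and $x=\mathtt{a}$. Here $\alpha=\locus(\mathtt{a})$, $\nu=\locus(\mathtt{aab})$ has only a $\mathtt{c}$-link, and $\gamma=\locus(\mathtt{aa})$ splits the edge from $\alpha$ to $\nu$. So $\beta=\nu\notin M$ gains the $\mathtt{a}$-link, and the conclusion ``$\beta\in M$'' in your first completeness bullet fails.

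The paper's argument needs no claim about where $\beta$ sits. After the round, $\gamma$ and $\beta$ have the same W-link set and $\beta$ is a descendant of $\gamma$. Hence $\gamma\in M'$ implies $\beta\in M'$, and the $D'$-descendant that your other two bullets provide for $\beta$ also serves for $\gamma$.
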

	
	Since the coloring of Lemma~\ref{lem:explicitmark} is done by two updates of an Euler tour list with only one color, 
	Lemma~\ref{lem:KN} implies that Operation (\ref{case:mu}) can be implemented via \textsc{colored predecessor} query in the case of one color. 
	Thus, Operation~(\ref{case:mu}) can be executed in $O(\log \log n)$ worst-case time (cf.\ comment after Theorem~\ref{thm:GioraKaplan}). 
	
	Operation~(\ref{case:mu}) retrieves node $\mu$. 
	To retrieve the potential SRE $(\mu,y)$ (see Lemma~\ref{lem:lefttoright}(\ref{case:4})), 
	we need to determine the letter $y$ that corresponds to the only W-link of the child~$\eta$ of $\mu$ on the path to $\lambda$. 
	To determine the edge leading to $\eta$, we retrieve the string depth\footnote{String depths of nodes can be determined and maintained along Weiner's algorithm at the creation time of the nodes in constant time.}~$d$ of $\mu$ and then retrieve 
	$z := \pali{w}[d+1]$ (before prepending the new character $x$ to $\pali{w}$). 
	Then $y$ is retrieved from the W-link of $\eta$. 
	Since we do not store soft W-links explicitly, it may happen that the desired W-link is not stored at~$\eta$.
	To cope with this, we also maintain, 
	for every node with exactly one defined W-link, the corresponding letter of this W-link; 
	hence, once $\eta$ is found, its unique W-link letter gives $y$.
	Note that to ensure that the letter is correct, we do not need to update or remove the assigned W-link letter of a node along the algorithm, as we only query nodes that must have exactly one W-link.

	We now specify how we keep track of the current set of SREs and of an SSS. 
	Before executing the W-link updates of the round, 
	we determine which SREs have to be added or deleted, 
	as specified in Lemma~\ref{lem:lefttoright}(\ref{case:2}),(\ref{case:3}), and (\ref{case:4}).
	The corresponding bit-vector updates are then performed once the insertion locus and the relevant new W-link destinations are known.
	Similarly to Section~\ref{sec:right-to-left}, we keep track of the leftmost end positions of the current SREs in the current string $w$, exploiting the distinctness of this position for each SRE. 
	For that, we maintain a growable (to the right) vector of $O(n)$ bits marking those positions. 
	Since we work with the suffix tree of the reversed string $\pali{w}$, we maintain, at each node $\nu$, the \textit{rightmost start} position of $\strlabel(\nu)$ in the current string $\pali{w}$. This is done similarly to how it is described in Section~\ref{sec:right-to-left}. 
	It remains to specify how we compute the leftmost end position of a given SRE that we have to add to or delete from the current set. That is, if $(\nu,x)$ is a SRE, we need to compute the rightmost start position of $x \cdot \strlabel(\nu)$ in $\pali{w}$. 
	If $W_x(\nu)$ is hard, this position is retrieved directly from the destination node. 
	If $W_x(\nu)$ is soft, we use the technique of \cite{kucherov17fullfledged} described at the end of Section~\ref{sec:fringe} that enables the recovery of soft W-links without explicitly storing them. 
	Then, again, we retrieve the start position from the destination node.  
	
	The results of this section are thus summarized as follows. 
\begin{theorem}
	In the left-to-right framework, the set of SREs and the SSS consisting of the leftmost positions of each SRE can be maintained online in worst-case $O(\timeSU)$ time per letter. This bound can be reduced to $O(\log\log n)$ for alphabet size $\sigma=O(\log^{1/4}n)$. 
\end{theorem}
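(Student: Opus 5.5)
The proof assembles the pieces of this section into a per-round cost accounting. We run the near-real-time Weiner algorithm of Theorem~\ref{final-weiner} on $\pali{w}$. Appending a letter $x$ to $w$ means prepending $x$ to $\pali{w}$, which costs $O(\timeSU)$ worst-case, or $O(\log\log n)$ when $\sigma=O(\log^{1/4}n)$. Correctness of the SRE set follows from Lemma~\ref{lem:lefttoright} and the exhaustiveness lemma after it: every change to $\sext(w)$ in a round is one of cases (\ref{case:1})--(\ref{case:4}). By induction on the rounds, the maintained set therefore equals $\sext(w)$ after each letter. What remains is to show that each case can be detected and applied within the per-round bound.

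Before the round modifies W-links, we carry out these steps in order.
\begin{enumerate}
\item Obtain $\alpha$ by Operation~(\ref{case:alpha}). This query is already part of Weiner's round, so it costs nothing extra.
\item Read the parent $\nu$ of $\lambda$ in $O(1)$ time and test case~(\ref{case:1}).
\item Otherwise, add $(\nu,x)$. If $\nu$ has exactly one W-link, add $(\nu,y)$ as well, with $y$ taken from the stored single-W-link letter of $\nu$.
\item Test whether $(\alpha,x)$ is currently an SRE, and delete it if so. For the test, keep a dictionary of current SREs keyed by (node, letter), for instance a hash table with worst-case $O(1)$ lookup or a small per-node structure. Since there are at most two insertions per round, each update is $O(1)$ worst-case.
\item For case~(\ref{case:4}), run Operation~(\ref{case:mu}) to get $\mu$. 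By Lemma~\ref{lem:KN}, this is one single-color colored predecessor query, one colored successor query and two $\lca$ queries. It costs $O(\log\log n)$ by Theorem~\ref{thm:GioraKaplan}, the coloring being maintained as in Lemma~\ref{lem:explicitmark}. Then find the child $\eta$ of $\mu$ through the string depth $d$ of $\mu$ and the letter $\pali{w}[d+1]$. This child access is subsumed by the colored predecessor bound, as noted before Theorem~\ref{final-weiner}. Read $y$ as the stored single-W-link letter of $\eta$, look up $(\mu,y)$ in the dictionary, and delete it if present.
\end{enumerate}

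We also need the positions. Each added or deleted SRE $(\xi,t)$ has a leftmost end position in $w$, namely the rightmost start position of $t\cdot\strlabel(\xi)$ in $\pali{w}$. We read it from the destination of $W_t(\xi)$. If that link is soft, we recover the destination via the hard-link technique of \cite{kucherov17fullfledged}. That technique is again a colored predecessor query, with color $t$ over $\sigma$ colors, so it costs $O(\timeSU)$ in general and $O(\log\log n)$ for small alphabets.

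A subtle point is timing. Deletions should use positions computed before the round, and insertions after the round, since the new node $\gamma$ and the leaf $\lambda'$ may be the destinations. Rightmost start positions are maintained at node creation, as in Section~\ref{sec:right-to-left}. We then flip the corresponding bits of a right-growable bit vector, using the resizable arrays of \cite{brodnik99resizable}, in $O(1)$ each. The distinctness of end positions \cite[Lemma~1]{DBLP:conf/spire/CenzatoOP24} ensures that the 1-bits form an SSS.

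The main obstacle is making sure that every auxiliary datum is correct and obtainable in worst-case time. Three things need care. First, the single-W-link letter stored at $\eta$ must still be valid. It is, because it is only queried when $\eta$ has exactly one W-link, and a node's W-link set only grows. Second, a newly created $\gamma$ must inherit this letter from $\beta$. Third, the dictionary of SREs must avoid amortization. Summing the costs, a constant number of colored predecessor operations plus $O(1)$ extra work per round gives the claimed $O(\timeSU)$ bound, and $O(\log\log n)$ when $\sigma=O(\log^{1/4}n)$.
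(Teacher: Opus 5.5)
Your proposal is correct and follows the paper's argument essentially step for step. It runs Weiner's algorithm on $\pali{w}$ with the case analysis of Lemma~\ref{lem:lefttoright}, Operation~(\ref{case:alpha}), and the single-color Operation~(\ref{case:mu}) via Lemma~\ref{lem:KN} and Lemma~\ref{lem:explicitmark}; it uses stored single-W-link letters and reads positions from (possibly recovered soft) W-link destinations.

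The one place to tighten is the SRE membership test, which the paper also leaves implicit. ``At most two insertions per round'' does not give a hash table deterministic worst-case $O(1)$ updates. Since distinct SREs have distinct leftmost end positions, you can instead store each SRE's (node, letter) pair in a growable array indexed by that position, which gives an $O(1)$ worst-case test; a bit alone would not suffice, because a longer SRE with the same letter may own that position. Similarly, whether $\nu$ has a single W-link is answered by the same single-color query, since the lowest colored ancestor of $\lambda$ is $\nu$ exactly when $\nu$ has two or more W-links.
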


\section{Conclusion}
In this paper, we have shown how to maintain the set of SREs and an SSS in near-real-time when the text is received online from either right to left or left to right.
In both cases, we managed to leverage the near-real-time implementation of Weiner's algorithm to achieve the same time complexity bounds, i.e.\ 
worst-case $O(\timeSU)$ time per letter, or $O(\log\log n)$ for alphabet size $\sigma=O(\log^{1/4}n)$.
Further improvements of the time complexity bounds for maintaining the suffix tree with Weiner's algorithm in near-real-time would directly imply improvements for maintaining the set of SREs and an SSS.
While our space usage is $O(n)$,
it would be interesting to study whether we can leverage techniques based on the online construction of the Burrows--Wheeler transform (BWT) to maintain the set of SREs online using compact space or compressed space in terms of the number of letter runs in the BWT.

	\clearpage
\bibliographystyle{plain}

\begin{thebibliography}{10}
	
	\bibitem{breslauer13near}
	Dany Breslauer and Giuseppe~F. Italiano.
	\newblock Near real-time suffix tree construction via the fringe marked
	ancestor problem.
	\newblock {\em J. Discrete Algorithms}, 18:32--48, 2013.
	
	\bibitem{brodnik99resizable}
	Andrej Brodnik, Svante Carlsson, Erik~D. Demaine, J.~Ian Munro, and Robert
	Sedgewick.
	\newblock Resizable arrays in optimal time and space.
	\newblock In {\em Proc.\ WADS}, volume 1663 of {\em LNCS}, pages 37--48, 1999.
	
	\bibitem{DBLP:journals/corr/abs-2407-18753}
	Davide Cenzato, Lore Depuydt, Travis Gagie, Sung-Hwan Kim, Giovanni Manzini,
	Francisco Olivares, and Nicola Prezza.
	\newblock Suffixient arrays: a new efficient suffix array compression
	technique.
	\newblock {\em CoRR}, abs/2407.18753, 2025.
	
	\bibitem{DBLP:conf/spire/CenzatoOP24}
	Davide Cenzato, Francisco Olivares, and Nicola Prezza.
	\newblock On computing the smallest suffixient set.
	\newblock In Zsuzsanna Lipt{\'{a}}k, Edleno~Silva de~Moura, Karina Figueroa,
	and Ricardo Baeza{-}Yates, editors, {\em String Processing and Information
		Retrieval - 31st International Symposium, {SPIRE} 2024, Puerto Vallarta,
		Mexico, September 23-25, 2024, Proceedings}, volume 14899 of {\em Lecture
		Notes in Computer Science}, pages 73--87. Springer, 2024.
	
	\bibitem{DBLP:journals/corr/abs-2506-08225}
	Davide Cenzato, Francisco Olivares, and Nicola Prezza.
	\newblock Testing suffixient sets.
	\newblock {\em CoRR}, abs/2506.08225, 2025.
	
	\bibitem{10.1145/3426473}
	Anders~Roy Christiansen, Mikko~Berggren Ettienne, Tomasz Kociumaka, Gonzalo
	Navarro, and Nicola Prezza.
	\newblock Optimal-time dictionary-compressed indexes.
	\newblock {\em ACM Trans. Algorithms}, 17(1), December 2021.
	
	\bibitem{cole05dynamic}
	Richard Cole and Ramesh Hariharan.
	\newblock Dynamic {LCA} queries on trees.
	\newblock {\em {SIAM} J. Comput.}, 34(4):894--923, 2005.
	
	\bibitem{DBLP:journals/corr/abs-2312-01359}
	Lore Depuydt, Travis Gagie, Ben Langmead, Giovanni Manzini, and Nicola Prezza.
	\newblock Suffixient sets.
	\newblock {\em CoRR}, abs/2312.01359, 2023.
	
	\bibitem{DBLP:journals/corr/abs-1302-3347}
	Johannes Fischer and Pawel Gawrychowski.
	\newblock Alphabet-dependent string searching with wexponential search trees.
	\newblock {\em CoRR}, abs/1302.3347, 2013.
	
	\bibitem{DBLP:journals/corr/abs-2506-05638}
	Hiroto Fujimaru, Gonzalo Navarro, Giuseppe Romana, and Cristian Urbina.
	\newblock Smallest suffixient sets: Effectiveness, resilience, and calculation.
	\newblock {\em CoRR}, abs/2506.05638, 2025.
	
	\bibitem{DBLP:journals/talg/GiyoraK09}
	Yoav Giyora and Haim Kaplan.
	\newblock Optimal dynamic vertical ray shooting in rectilinear planar
	subdivisions.
	\newblock {\em {ACM} Trans. Algorithms}, 5(3):28:1--28:51, 2009.
	
	\bibitem{gusfield97algorithms}
	Dan Gusfield.
	\newblock {\em Algorithms on Strings, Trees, and Sequences: Computer Science
		and Computational Biology}.
	\newblock Cambridge University Press, 1997.
	
	\bibitem{kempa18stringattractors}
	Dominik Kempa and Nicola Prezza.
	\newblock At the roots of dictionary compression: string attractors.
	\newblock In {\em Proc.\ STOC}, pages 827--840, 2018.
	
	\bibitem{KopplKucherov2026}
	Dominik K\"oppl and Gregory Kucherov.
	\newblock Near-real-time solutions for online string problems.
	\newblock {\em CoRR}, abs/2602.15311, 2026.
	\newblock to appear in CPM 2026.
	
	\bibitem{kucherov17fullfledged}
	Gregory Kucherov and Yakov Nekrich.
	\newblock Full-fledged real-time indexing for constant size alphabets.
	\newblock {\em Algorithmica}, 79(2):387--400, 2017.
	
	\bibitem{DBLP:journals/corr/abs-2510-16454}
	Gregory Kucherov and Yakov Nekrich.
	\newblock Online computation of normalized substring complexity.
	\newblock {\em CoRR}, abs/2510.16454, 2025.
	\newblock appeared in LATIN 2026.
	
	\bibitem{DBLP:conf/soda/Mortensen03}
	Christian~Worm Mortensen.
	\newblock Fully-dynamic two dimensional orthogonal range and line segment
	intersection reporting in logarithmic time.
	\newblock In {\em Proceedings of the Fourteenth Annual {ACM-SIAM} Symposium on
		Discrete Algorithms, January 12-14, 2003, Baltimore, Maryland, {USA}}, pages
	618--627. {ACM/SIAM}, 2003.
	
	\bibitem{mortensen06fully}
	Christian~Worm Mortensen.
	\newblock Fully dynamic orthogonal range reporting on {RAM}.
	\newblock {\em {SIAM} J. Comput.}, 35(6):1494--1525, 2006.
	
	\bibitem{navarro21indexing1}
	Gonzalo Navarro.
	\newblock Indexing highly repetitive string collections, part {I:}
	repetitiveness measures.
	\newblock {\em {ACM} Comput. Surv.}, 54(2):29:1--29:31, 2021.
	
	\bibitem{navarro21indexing2}
	Gonzalo Navarro.
	\newblock Indexing highly repetitive string collections, part {II:} compressed
	indexes.
	\newblock {\em {ACM} Comput. Surv.}, 54(2):26:1--26:32, 2021.
	
	\bibitem{DBLP:conf/spire/NavarroRU25}
	Gonzalo Navarro, Giuseppe Romana, and Cristian Urbina.
	\newblock Smallest suffixient sets as a repetitiveness measure.
	\newblock In Golnaz Badkobeh, Jakub Radoszewski, Nicola Tonellotto, and Ricardo
	Baeza{-}Yates, editors, {\em String Processing and Information Retrieval -
		32nd International Symposium, {SPIRE} 2025, London, UK, September 8-11, 2025,
		Proceedings}, volume 16073 of {\em Lecture Notes in Computer Science}, pages
	217--232. Springer, 2025.
	
	\bibitem{raskhodnikova13sublinear}
	Sofya Raskhodnikova, Dana Ron, Ronitt Rubinfeld, and Adam~D. Smith.
	\newblock Sublinear algorithms for approximating string compressibility.
	\newblock {\em Algorithmica}, 65(3):685--709, 2013.
	
	\bibitem{ShibataBannai26}
	Hiroki Shibata and Hideo Bannai.
	\newblock String representation in suffixient set size space.
	\newblock {\em CoRR}, abs/2604.04377, 2026.
	
	\bibitem{weiner73linear}
	Peter Weiner.
	\newblock Linear pattern matching algorithms.
	\newblock In {\em Proc. 14th Symposium on Switching and Automata Theory
		(SWAT)}, pages 1--11, 1973.
	
\end{thebibliography}

\clearpage
\appendix

\section{Additional figures}\label{app:figs}

\begin{figure}[h]
	\centering
	\begin{minipage}{0.45\linewidth}
	\includegraphics[width=\linewidth]{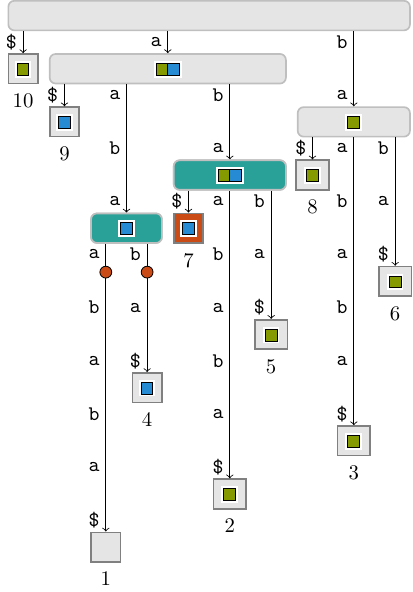}

	\setlength{\tabcolsep}{2pt}
	Suffix tree of 
\begin{tabular}{r*{10}{c}}
$i = $ & 1 & 2 & 3 & 4 & 5 & 6 & 7 & 8 & 9 & 10 \\
$w[i] =$ & \texttt{a} & \texttt{a} & \texttt{b} & \texttt{a} & \underline{\texttt{a}} & \texttt{b} & \texttt{a} & \underline{\texttt{b}} & \texttt{a} & \underline{\texttt{\$}} \\
\end{tabular}
	\end{minipage}
	\hfill
	\begin{minipage}{0.45\linewidth}
	\includegraphics[width=\linewidth]{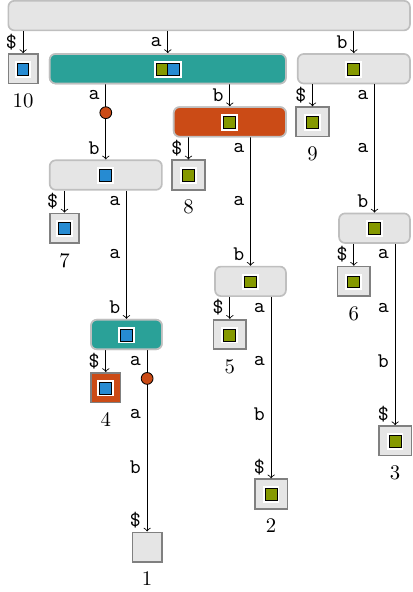}

	\setlength{\tabcolsep}{2pt}
	Suffix tree of 
\begin{tabular}{r*{10}{c}}
$i = $ & 1 & 2 & 3 & 4 & 5 & 6 & 7 & 8 & 9 & 10 \\
$v[i] =$ & \texttt{a} & \underline{\texttt{a}} & \underline{\texttt{b}} & \texttt{a} & \texttt{a} & \texttt{b} & \underline{\texttt{a}} & \texttt{a} & \texttt{b} & \underline{\texttt{\$}} \\
\end{tabular}
	\end{minipage}
	\caption{Suffix trees of the strings in Example~\ref{ex:suffixient_example}. 
		The locus $ux$ for each SRE~$(u,x)$ marked by an orange dot ({\color{solarizedOrange}$\blacksquare$})
	can be either a node or an implicit node; 
	the node with label $u$ is further filled in cyan ({\color{solarizedCyan}$\blacksquare$}).
	Leaves are labeled below by the starting position of the corresponding suffix in the string, e.g.\ the leaf labeled by $1$ corresponds to the entire string.
	Nodes are augmented with the colored boxes similar to Fig.~\ref{fig:weiner_suffix_update} to visualize the presence of an $\mathtt{a}$-link with green ({\color{solarizedGreen}$\blacksquare$}) and a $\mathtt{b}$-link with blue ({\color{solarizedBlue}$\blacksquare$}).
	\textbf{Left:} $w$ has only one SSS since all orange marked loci are leaves or on edges connecting to leaves. This SSS is depicted on bottom by underlining the letters at the end positions of the SREs. 
	\textbf{Right:} Since $\mathtt{aa}$ and $\mathtt{ab}$ each have three occurrences in $v$ (by counting the number of leaves in the respective subtrees), $v$ has nine SSSs. They are depicted separately in Figure~\ref{fig:sss_full_example}. 
	Each SSS can be obtained by picking, for each SRE~$(u,x)$, a suffix number of a leaf of the subtree of its locus (marked in orange) incremented with $|u|$.
}
	\label{fig:suffixient_example}
\end{figure}

\begin{figure}[t]
	\centering
	\includegraphics{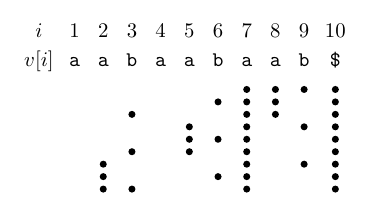}
	\caption{All SSSs of $v = \mathtt{aabaabaab\$}$ depicted by rows of dots.
		By \cref{fig:suffixient_example}, the positions $7$ and $10$ are present in all SSS since their loci are leaves or on edges leading to leaves (leaves with suffix numbers 1 and 4 in the figure).
	}
	\label{fig:sss_full_example}
\end{figure}

\begin{figure}[h]
	\centering
	\includegraphics[valign=m,page=1]{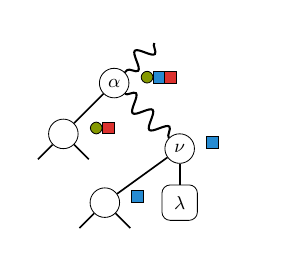}
	\quad $\longrightarrow$ \quad
	\includegraphics[valign=m,page=2]{./suffixientimg/lrcases.pdf}

	Lemma~\ref{lem:lefttoright}(\ref{case:2}) 

	\includegraphics[valign=m,page=3]{./suffixientimg/lrcases.pdf}
	\quad $\longrightarrow$ \quad
	\includegraphics[valign=m,page=4]{./suffixientimg/lrcases.pdf}

	Lemma~\ref{lem:lefttoright}(\ref{case:3})

	\includegraphics[valign=m,width=0.45\textwidth,page=5]{./suffixientimg/lrcases.pdf}
	$\longrightarrow$ 
	\includegraphics[valign=m,width=0.45\textwidth,page=6]{./suffixientimg/lrcases.pdf}

	Lemma~\ref{lem:lefttoright}(\ref{case:4})

	\caption{Illustration to the modifications of the set of SREs specified in Lemma~\ref{lem:lefttoright}(\ref{case:2}),(\ref{case:3}),(\ref{case:4}) of \cref{sec:SREmodifications}.
		Each subfigure depicts a transformation resulting from a round of Weiner's algorithm. 
		\hili{Like in Figure~\ref{fig:weiner_suffix_update}, squares represent individual letters and circles denote sets of letters. 
			Color-filled nodes represent SREs: a node $\chi$ colored $y$ witnesses a SRE $(\zeta,y)$, where $\chi$ is a child of $\zeta$ having only a $y$-link in its set of W-links.
	Lemma~\ref{lem:lefttoright}(\ref{case:2}): the subfigure depicts the case when $\nu$ has a single W-link ({\color{solarizedBlue}$\blacksquare$}) before the round, in which case $\nu$ induces a new blue SRE in addition to the new red SRE. If $\nu$ has two or more W-links before the round, only the red SRE is created. %
	Lemma~\ref{lem:lefttoright}(\ref{case:3}): the subfigure depicts the case when $\alpha$ formed a red SRE before the round, which ceases to be a SRE after the round. 
	Lemma~\ref{lem:lefttoright}(\ref{case:4}): $\nu$ has a single color and $\mu$ is the lowest ancestor of $\nu$ with two or more colors. The blue SRE ceases to be a SRE. 
}
	}
	\label{fig:lrcases}
\end{figure}

\end{document}